\journal{Computers \& Security} %, Special Issue on Game Theory}
\definecolor{Gray}{gray}{0.93}
\pgfplotsset{compat=1.12}
\newcommand{\Aron}[1]{\todo[color=orange!40, linecolor=black!30!orange!60!white]{\textbf{Aron}: #1}}
\newdefinition{definition}{Definition}
\newtheorem{theorem}{Theorem}
\newtheorem{proposition}{Proposition}
\newproof{proof}{Proof}
\DeclareMathOperator*{\argmax}{argmax}
\DeclareMathOperator*{\argmin}{argmin}
\newcommand{\vect}[1]{\ensuremath{\boldsymbol{#1}}}
\newcommand{\calC}[0]{\ensuremath{\mathcal{C}}}
\newcommand{\calG}[0]{\ensuremath{\mathcal{G}}}
\newcommand{\calI}[0]{\ensuremath{\mathcal{I}}}
\newcommand{\calL}[0]{\ensuremath{\mathcal{L}}}
\newcommand{\vA}[0]{\vect{A}}
\newcommand{\vD}[0]{\vect{D}}
\newcommand{\vM}[0]{\vect{M}}
\newcommand{\vp}[0]{\vect{p}}
\newenvironment{revision}{}{}
\begin{document}
\setlength{\marginparwidth}{1.2cm} % THIS IS REQUIRED FOR THE CORRECT PLACEMENT OF TODONOTES MARGIN NOTES

\title{Detection and Mitigation of Attacks on Transportation Networks as a\\Multi-Stage Security Game}  

\author[1]{Aron Laszka\corref{cor1}}
\address[1]{University of Houston, Houston, USA}
%  \institution{University of Houston}
%   \streetaddress{3551 Cullen Blvd}
%   \city{Houston}
%   \state{TX}
%   \postcode{77204}
%   \country{USA}}
\author[2]{Waseem Abbas}
\address[2]{Information Technology University, Lahore, Pakistan}
% \affiliation{%
%   \institution{Information Technology University}
%   \streetaddress{Arfa Software Technology Park, Ferozepur Road}
%   \city{Lahore}
%   \country{Pakistan}}
\author[3]{Yevgeniy Vorobeychik}
\address[3]{Washington University in St. Louis, St. Louis, USA}
% \affiliation{%
%   \institution{Vanderbilt University}
%   \streetaddress{1025 16th Avenue South}
%   \city{Nashville}
%   \state{TN}
%   \postcode{37212}
%   \country{USA}}
\author[4]{Xenofon~Koutsoukos}
\address[4]{Vanderbilt University, Nashville, USA}
% \affiliation{%
%   \institution{Vanderbilt University}
%   \streetaddress{1025 16th Avenue South}
%   \city{Nashville}
%   \state{TN}
%   \postcode{37212}
%   \country{USA}}

\cortext[cor1]{Corresponding author}

\begin{abstract}
In recent years, state-of-the-art traffic-control devices have evolved from standalone hardware to networked smart devices.
Smart traffic control enables operators to decrease traffic congestion and environmental impact by acquiring real-time traffic data and changing traffic signals from fixed to adaptive schedules.
However, these capabilities have inadvertently exposed traffic control to a wide range of cyber-attacks, which adversaries can easily mount through wireless networks or even through the Internet.
Indeed, recent studies have found that a large number of traffic signals that are deployed in practice suffer from exploitable vulnerabilities, which adversaries may use to take control of the devices.
Thanks to the hardware-based failsafes that most devices employ, adversaries cannot cause traffic accidents directly by setting compromised signals to dangerous configurations.
Nonetheless, an adversary could cause disastrous traffic congestion by changing the schedule of compromised traffic signals, thereby effectively crippling the transportation network.
To provide theoretical foundations for the protection of transportation networks from these attacks,
we introduce a game-theoretic model of launching, detecting, and mitigating attacks that tamper with traffic-signal schedules.
We show that finding optimal strategies is a computationally challenging problem, and we propose efficient heuristic algorithms for finding near optimal strategies.
We also introduce a Gaussian-process based anomaly detector, which can alert operators to ongoing attacks.
Finally, we evaluate our algorithms and the proposed detector using numerical experiments based on the SUMO traffic simulator.
\end{abstract}

\begin{keyword}
security \sep game theory \sep transportation network \sep cyber-physical system \sep intrusion detection \sep computational complexity
\end{keyword}

\maketitle

%\newpage
\section{Introduction}
\label{sec:intro}

The evolution of traffic signals from standalone hardware devices to complex networked systems has provided society with many benefits, such as reducing wasted time and environmental impact.
However, it has also exposed traffic signals to a variety of cyber-attacks.
%Traffic signals were originally standalone hardware devices running on fixed schedules, but by now, they have evolved into complex networked systems.
While traditional hardware systems were susceptible only to attacks based on direct physical access, modern systems are vulnerable to attacks through wireless interfaces or even to remote attacks through the Internet.
To assess the severity of these threats in practice, Ghena et al.\ recently analyzed the security of real-world traffic infrastructure in cooperation with a road agency located in Michigan~\cite{ghena2014green}.
This agency operates around a hundred traffic signals, which are all part of the same wireless network, but the signals at every intersection operate independently of the other intersections.
The study found three major weaknesses in the traffic infrastructure: lack of encryption for the wireless network, lack of secure authentication due to the use of default usernames and passwords on the devices, and the presence of exploitable software vulnerabilities.

While all of these known weaknesses could be eliminated, it is extremely difficult to ensure that devices are free of any unknown weaknesses.
In general, it is virtually impossible---or prohibitively expensive---to ensure that a system is perfectly secure.
%Even if all the weaknesses discovered by this investigation~\cite{ghena2014green} were corrected, it would be extremely difficult to ensure that no weaknesses prevent all future software vulnerabilities.
In addition to the general difficulty of attaining perfect security, traffic-control devices pose further challenges.
Similar to other distributed cyber-physical systems,  traffic-control systems have large attack surfaces, and they often have long system lifetime and complicated software-upgrade procedures, which makes fixing vulnerabilities difficult.
Consequently, operators cannot hope to stop all cyber-attacks since a determined and sophisticated attacker might always find a way to compromise some of the devices.
Therefore, instead of focusing solely on the first line of defense, operators must also consider minimizing the impact of successful cyber-attacks.
%Consequently, ensuring that there will not be any opportunities for attack during the lifetime of a system is practically impossible, and we must consider the impact of successful attacks.

%Typically, the software running on a traffic-signal controller -- similarly to other cyber-physical systems -- has relatively long lifetime, and it may be difficult to upgrade when a vulnerability is discovered.
%Consequently, even though all of the individual vulnerabilities discovered by the investigation in~\cite{ghena2014green} may be fixed, it is practically impossible to ensure that there will not be any opportunity for an attack in the future.
%Alternatively, we can mitigate the effects of attacks by configuring the traffic signals in a resilient way, that is, in a way ensuring that even if an attack succeeds in compromising some of the controllers, the impact on the traffic situation remains tolerable.

Due to hardware-based failsafes, compromising a traffic signal does not allow an attacker to set the signal to an unsafe configuration that would lead to traffic accidents, such as giving green light to two intersecting directions.
However, compromising a signal does enable tampering with its schedule, which allows the attacker to cause disastrous traffic congestion.
Such malicious cyber-attacks may be launched by any adversary whose interest is to case disruption and damage, ranging from cyber-terrorists to disgruntled ex-employees.
For instance, during the Los Angeles traffic engineers' 2006 strike, two disgruntled employees allegedly penetrated the traffic-control system of the city, and reprogrammed the traffic lights of four intersections to cause congestion: ``[t]he red signal would be on too long for the critical approach and the green signal would be on too long for the noncritical approach, thus resulting in long backups into the airport and other key intersections around the city''~\cite{bernstein2007key}.
Furthermore, terrorists could also mount these attacks in conjunction with physical attacks, thereby increasing their impact (e.g., delaying ambulances and firefighters).

To minimize the impact of attacks tampering with traffic signals, operators must be able to detect and mitigate them promptly and effectively.
In practice, the detection of novel cyber-attacks poses multiple challenges.
Since signature-based detectors are ineffective against novel attack, operators must employ anomaly-based detectors.
However, these detectors are prone to raising false alarms, which must be investigated manually, resulting in a waste of manpower and resources.
Considering the relative scarcity of attacks, the cost of these investigations may exceed the benefit of early attack detection and mitigation.
Therefore, when configuring the sensitivity of detectors, operators must carefully balance the cost of false alarms and the risk from delayed detection.
Moreover, sophisticated attackers can act strategically by mounting stealthy attacks, which delay detection but still cause significant impact.
In light of this, operators must also plan their defense strategically, by anticipating the attackers' responses.

%In order to increase the resilience of transportation networks to tampering attacks, we must first be able to assess how vulnerable a given network is, that is, we must be able to estimate the potential impact of tampering attacks.
%Since this impact depends on both the transportation network and the schedules of the uncompromised signals in a non-trivial way, vulnerability assessment is a challenging problem.

\subsection*{Contributions}

In~\cite{laszka2016vulnerability}, we introduced an approach for evaluating the vulnerability of transportation networks to cyber-attacks that tamper with traffic-control devices.
In this paper, we extend this approach by considering detectors and countermeasures that operators can implement to mitigate these attacks. 
In particular, we introduce a game-theoretic model, in which an operator can setup anomaly-based detectors and mitigate ongoing attacks by reconfiguring traffic control.
Similar to~\cite{laszka2016vulnerability}, we build on the cell-transmission model introduced by Daganzo~\cite{daganzo1994cell,daganzo1995cell}.
\begin{revision}
To the best of our knowledge, our work is the first to consider the problem of designing and deploying systems based on traffic-sensors measurements to detect tampering attacks against traffic control. \end{revision} Our main contributions in this paper are the following:
\begin{itemize}
\item We formulate a multi-stage security game that models the detection and mitigation of cyber-attacks against transportation networks.
\item We propose an efficient metaheuristic search algorithm for finding detector configurations that minimize losses in the face of strategic attacks.
\item We introduce an anomaly-based detector for attacks against traffic control, which is built on a Gaussian-process based model of normal traffic.
\item We evaluate our detector and algorithms based on detailed simulations of traffic using SUMO.
\end{itemize}

\subsection*{Outline}
The remainder of this paper is organized as follows.
%In Section~\ref{sec:prelim}, we describe the traffic model on which we build our game-theoretic model. 
In Section~\ref{sec:model}, we introduce our game-theoretic model of detecting and mitigating attacks against transportation networks.
In Section~\ref{sec:analysis}, we present computational results on our model and propose efficient heuristic algorithms.
In Section~\ref{sec:detector}, we introduce a Gaussian-process based detector for attacks against traffic control.
In Section~\ref{sec:numerical}, we use detailed simulations of transportation networks to evaluate our detector and the heuristic algorithms.
In Section~\ref{sec:related}, we discuss related work on the vulnerability of transportation networks, configuration of attack detectors, and game theory for security of cyber-physical systems.
Finally, in Section~\ref{sec:concl}, we offer concluding remarks.

\section{Game-Theoretic Model of Attacks on Traffic Signals}
\label{sec:model}

\begin{revision}
In this section, we introduce our model of launching, detecting, and mitigating cyber-attacks against traffic control in transportation networks.
Our model includes two agents: an attacker who can launch cyber-attacks and a defender who attempts to detect and mitigate them.
Since these agents may anticipate and react to each other's actions, we formulate our model using game theory, which enables us to capture the strategic interactions between the two agents.
For a list of symbols used in our model, see Table~\ref{tab:symbols_game}.

\begin{table}[h!]
\begin{revision}
\caption{List of Symbols}
\label{tab:symbols_game}
\centering
\renewcommand*{\arraystretch}{1.0}
\setlength{\tabcolsep}{3pt}
\begin{tabular}{| c | p{6.5cm} |}
\hline
Symbol & \multicolumn{1}{l|}{Description} \\
\hline
\multicolumn{2}{|c|}{Traffic Model} \\
\hline
$Q_i$ & maximum number of vehicles that can flow into or out of cell $i$ \\
\rowcolor{Gray} $\delta_i$ & ratio between free-flow speed and backward propagation speed of cell $i$ \\
$N_i$ & maximum number of vehicles in cell $i$ \\
\rowcolor{Gray} $d_i$ & demand (inflow) at source cell $i$ \\
$\Gamma^{-1}(i)$ & set of predecessor cells to cell $i$ \\
\rowcolor{Gray} $p_{ki}^t$ & inflow proportion from cell $k$ to signalized intersection $i$ \\
$\calI$ & set of signalized intersections \\
\hline
\multicolumn{2}{|c|}{Game Constants and Functions} \\
\hline
$\calI_D$ & set of signalized intersections with attack detectors \\
\rowcolor{Gray} $B$ & attacker's budget for compromising traffic signals \\
$\Delta_D(\vD, \vA)$ & detection delay for detector configuration $\vD$ and attack $\vA$ \\
\rowcolor{Gray} $\Delta_M$ & length of mitigation (before returning to normal operation) \\
$T$ & congestion with default traffic control \\
\rowcolor{Gray} $T_A(\vA)$ & congestion as a result of attack $\vA$ (before mitigation) \\
$C$ & cost of investigating a false alarm \\
\rowcolor{Gray} $T_M(\vA, \vM)$ & congestion after mitigation $\vM$ of attack $\vA$ \\
$\calG(\vD, \vA, \vM)$ & attacker's gain for actions $(\vD, \vA, \vM)$ \\
\rowcolor{Gray} $\calL(\vD, \vA, \vM)$ & defender's loss for actions $(\vD, \vA, \vM)$ \\
\hline
\multicolumn{2}{|c|}{Game Variables} \\
\hline
$\vD$ & defender's detector configuration action (Stage I) \\
\rowcolor{Gray} $D_i$ & false-positive rate of detector at intersection $i \in \calI_D$ \\
$\vA = (\calI_A, \hat{\vp})$ & attacker's attack action (Stage II) \\
\rowcolor{Gray} $\calI_A$ & set of signals compromised by the attacker \\
$\hat{p}_{ki}$ & modified (by attack or mitigation) inflow proportion from cell $k$ to intersection $i$ \\
\rowcolor{Gray} $\vM$ & defender's mitigation action (Stage III) \\
\hline
\end{tabular}
\end{revision}
\end{table}

\subsection{Traffic Model}
\label{sec:traffic_model}

First, we introduce Daganzo's cell transmission model, the traffic model on which our game-theoretic model, our analysis, and our numerical evaluation are built. 
Here, we provide only a very brief summary of this traffic model, focusing on the notation that will be used throughout the paper.
For a detailed description of the model, we refer the reader to~\cite{daganzo1994cell,daganzo1995cell,ziliaskopoulos2000linear}.
\footnote{For readers who are familiar with the cell-transmission model, we recommend to continue with Section~\ref{sec:game_model}.}

The cell transmission model divides a road network into \emph{cells}, which represent homogeneous road segments, and divides time into uniform \emph{intervals}.
The length of a road segment corresponding to a cell is equal to the distance traveled in light traffic by a typical vehicle in one time interval.
Each cell $i$ has three sets of parameters:
$N_i^t$ is the maximum number of vehicles that can be present in cell $i$ at time $t$;
$Q_i^t$ is the maximum number of vehicles that can flow into or out of cell $i$ during time interval~$t$;
and $\delta_i^t$ is the ratio between the free-flow speed and the backward propagation speed of cell $i$ at time $t$ (see~\cite{ziliaskopoulos2000linear} for a detailed explanation).%
\footnote{This constant is used to quantify how the speed of traffic decreases as the cell becomes congested, and can model traffic phenomena such as shockwaves.} %For more details, we refer the reader to~\cite{ziliaskopoulos2000linear} and~\cite{daganzo1994cell}.
Cells that model road segments where vehicles can enter traffic are called \emph{source cells}, and each source cell $i$ has a traffic demand parameter $d_i^t$, which is the number of vehicles entering traffic at cell $i$ in time interval $t$.
Cells where vehicles may exit traffic are called \emph{sink cells}.

Every cell is connected to one or more other cells: cells that correspond to consecutive road segments or road segments that are joined by an intersection are connected.
The set of cells from which vehicles can move into cell $i$ is called the set of \emph{predecessor cells}, denoted by $\Gamma^{-1}(i)$.
Cell that have multiple predecessors are called \emph{merging cells}, while cells that are the predecessors of multiple cells are called \emph{sink cells}.
To model signal control at intersections, we follow Daganzo's proposition~\cite{daganzo1995cell} and introduce the time-dependent parameters $p_{ki}^t$ controlling the inflow proportions of merging cell $i$.
We let $\calI$ denote the set of merging cells that model signalized intersections.

To solve the traffic model (i.e., to determine the traffic flow for a given network and set of parameters), we use Ziliaskopoulos's linear programming approach~\cite{ziliaskopoulos2000linear}.
The objective of this linear program is the sum of the number of vehicles traveling (i.e., number of vehicles on the road) over time, which is clearly equal to the \emph{total travel time} of all the vehicles.
As a consequence, we can use the value of the linear program---which can be computed efficiently for a given instance---as a measure of network congestion.
\end{revision}

Finally, we consider relatively short-term attack scenarios, in which the parameters of the cells and the default (i.e., unattacked) schedules of the traffic signals are constant.
Hence, in our game-theoretic model, we will omit the superscript $^t$ from $Q_i^t$, $N_i^t$, $\delta_i^t$, and~$p_{ki}^t$.

\subsection{Multi-Stage Security Game}
\label{sec:game_model}

We model defensive countermeasures and attacks in a transportation network as a two-player multi-stage security game between a defender and an attacker~\cite{laszka2019towards}.
The defender represents the operator of the transportation network, who can configure traffic-control devices and aims to minimize congestion in the network.
The attacker represents any strategic adversary that can compromise and tamper with traffic signals and aims to maximize congestion.

%Figure~\ref{fig:gameOverview} shows a high-level overview of the game.

In a nutshell, our game consists of the following three stages.
\begin{enumerate}[I.]
\item \textbf{Detector Configuration}: 
In the first stage, the defender configures detectors, which are deployed in the transportation network, to detect cyber-attacks against traffic control.
The detectors may be traffic-anomaly based detectors or conventional cyber-security intrusion detection systems (IDS).
When configuring these detectors, the defender should anticipate the attacker's possible adversarial actions in the second~stage.
\item \textbf{Attack on Traffic Control}:
In the second stage, the attacker mounts a cyber-attack against the transportation network by compromising traffic signals and tampering with their schedule to cause congestion.
When choosing its attack, the attacker must take into account both the detector configuration chosen by the defender in the first stage as well as the defender's possible mitigation actions in the third stage.
\item \textbf{Mitigation of Attack}:
In the third stage, the defender attempts to mitigate the attack by changing the configuration of uncompromised traffic-control devices to minimize congestion. % and by rerouting traffic.
As this is the final stage, mitigation simply needs to respond to the particular attack that was launched in the second stage.
%The optimal mitigation must be a response to the specific attack that was launched by the attacker in the second stage.
\begin{revision}
Note that once the defender detects an ongoing attack, it should also try to regain control of the compromised devices as soon as possible.
However, since the devices may be physically scattered throughout the transportation network, regaining control of them can take a long time.
For instance, an attacker could have changed remote login passwords, severed commu\-nication-network connections, etc., forcing the defender to physically reset or reinstall compromised devices. 
Meanwhile, disastrous traffic congestions may form in the transportation network, which the defender must mitigate immediately.
\end{revision}
\end{enumerate}

%\subsubsection{Players}

%\paragraph{Defender}

%\paragraph{Attacker}

\subsubsection{Stages and Strategic Choices}

Next, we provide a detailed description of the three stages of the game and the players' action spaces.

\paragraph{Stage I: Detector Configuration}

To detect stealthy cyber-attacks, detectors are deployed on the traffic-control devices at a subset $\calI_D$ of the signalized intersections~$\calI$.
These detectors can be either traffic-anomaly based detectors, such as the one that we will introduce in Section~\ref{sec:detector}, or conventional cyber-security intrusion detection systems. 
We assume that detectors are imperfect, which means that they may raise false alarms (when there is no attack in progress) and they may detect actual attacks with some delay.
The rate of false-positive errors and detection delay both depend on how sensitive a detector is: a more sensitive detector is more likely to raise false alarms but detects actual attacks earlier, and vice versa.
We assume that the operator can configure the sensitivity of every one of the $|\calI_D|$ detectors individually.

%For example, the defender can use a Gaussian-process based predictor for detecting deviations from expected traffic flow patterns.
%In practice, the hyper-parameters of a Gaussian-process based predictor can be found using observations of normal traffic behavior.

Specifically, in the first stage of the game, the defender chooses a sensitivity configuration~$\vD$ for the detectors. 
For ease of presentation, we let the sensitivity of the detector at each intersection $i \in \calI_D$ be represented by the false-positive rate of the detector.
Formally, a detector configuration~$\vD$ is an $|\calI_D|$-dimensional non-negative vector, where $D_i$ is the rate of false alarms raised by the detector at intersection $i \in \calI_D$.

\paragraph{Stage II: Cyber-Attack on Traffic Control}

\Aron{Describe attacker model in more detail and motivate it!}
In the second stage of the game, the attacker compromises a subset of the traffic signals and changes their schedule.
We let $\calI_A \subseteq \calI$ denote the set of traffic signals that the attacker chooses to compromise.
%The attacker's choice for $\calS_A$ is constrained in two ways.
We assume that the attacker is resource bounded, which means that it can compromise signals in at most $B \leq |\calI|$ intersections at the same time.
Hence, the attacker's choice $\calI_A$ must satisfy $|\calI_A| \leq B$.

%{Software vulnerabilities:} We assume that the attacker can compromise traffic control devices only if it has discovered (or learned of) an unpatched software vulnerability in the devices. 
%We let the set of vulnerable devices be denoted by $V_V \subseteq V_C$.
%Under this constraint, the attacker can choose only subsets of $V_V$ (i.e., $V_A \subseteq V_V$).
%Note that for a more detailed and non-deterministic model, we could build on the vulnerability model that we have introduced in our earlier work on resilient cyber-physical systems~\cite{laszka2017synergic}.

Once the attacker has compromised a set of traffic signals~$\calI_A$, it can reconfigure every one of them. 
\begin{revision}
However, most traffic control devices have hardware-level safety mechanisms in practice (see, e.g.,~\cite{ghena2014green}), which constrain the configurations that may be set by an adversary.
In particular, traffic signals typically use malfunction management units as a safety feature against controller faults (e.g., overriding a faulty controller that would give green lights to two intersecting directions).
These hardware-based failsafes also limit the impact of cyber attacks (e.g., preventing the attacker from causing traffic accidents) by overriding invalid configurations.
In our traffic model, the attacker's reconfiguration corresponds to setting new inflow proportions $\hat{p}_{ki}$ for the cells in $\calI_A$.
Therefore, we can model hardware-level failsafes by requiring the inflow proportions chosen by the attacker to constitute a valid configuration.
\end{revision}
Specifically, we assume that the inflow proportions set by a feasible attack must sum up to~$1$ for each compromised intersection:
\begin{equation}
\forall i \in \calI_A:  \sum_{k \in \Gamma^{-1}(i)} \hat{p}_{ki} = 1 .
\end{equation}

In sum, we can represent a feasible attack action $\vA$ as a pair $\vA = (\calI_A, \hat{\vp})$ that satisfies $|\calI_A| \leq B$ and $\sum_{k \in \Gamma^{-1}(i)} \hat{p}_{ki} = 1$ for every $i \in \calI_A$,
where $\calI_A$ is the set of compromised signals, and $\hat{\vp}$ are the tampered signal schedules.

\paragraph{Between Stages II and III: Detection}

Once the attacker has perpetrated its attack in the second stage, the compromised signals begin to operate with tampered schedules, which results in increased congestion in the transportation network.
Eventually, the detectors deployed by the defender will detect the attack (based on either traffic or cyber anomalies).
We let $\Delta_D$ denote the detection delay, that is, the amount of time between the launch and detection of the attack.
The detection delay depends on both the configuration $\vD$ chosen by the defender and the attack $\vA$ chosen by the attacker, which we express by representing delay as a function $\Delta_D(\vD, \vA)$ of~$\vD$ and~$\vA$.
Once the attack is detected, the game progresses to the third stage.

\paragraph{Stage III: Mitigation of Attack}

In the third stage, the defender mitigates the detected attack by reconfiguring traffic-control devices to alleviate congestion.
We assume that the defender can reconfigure any device that is still under its control, that is, any traffic signal that is not compromised by the attacker.
Since the attacker has compromised signals $\calI_A$, the defender can set new inflow proportions $\hat{p}_{ki}$ for the cells $i$ in $\calI \setminus \calI_A$.
We again require the new configuration to be valid, which means that the inflow proportions must sum up to $1$ for each reconfigured intersection:
\begin{equation}
\forall i \in \calI \setminus \calI_A:  \sum_{k \in \Gamma^{-1}(i)} \hat{p}_{ki} = 1 .
\end{equation}
Finally, we let $\vM = \{ \hat{p}_{ki} \,|\, i \in \calI \setminus \calI_A\}$ denote the defender's mitigation (i.e., reconfiguration) action.

\subsubsection{Player's Utilities}

We now define the defender's and the attacker's utilities resulting from the various strategic choices that they can make in the game.
First, we let $T$ denote the level of congestion in the transportation network with the default configuration of traffic-control devices (i.e., with inflow proportions $\vp$).
In practice, we can quantify congestion $T$ as, e.g., the average or total travel time of the vehicles in the transportation network between their origin and destination.
Recall from Section~\ref{sec:traffic_model} that we can efficiently compute travel time with default proportions~$\vp$ in our traffic model using a linear program.

Next, we let $T_A(\vA)$ denote the level of congestion after the attack but before the mitigation, which depends on the attacker's action $\vA$ chosen in the second stage.
Similar to $T$, we can compute $T_A(\vA)$ using our traffic model with the default proportion $p_{ki}$ for every cell $i \in \calI \setminus \calI_A$ but with the adversarial proportion~$\hat{p}_{ki}$ for every cell $i \in \calI_A$.
Finally, we let $T_M(\vA, \vM)$ denote the level of congestion after the attack has been mitigated, which depends on both attacker's action $\vA$ and the defender's mitigation action $\vM$.

We let the attacker's gain $\calG$ (i.e., positive utility) for actions $(\vD, \vA, \vM)$ be the total impact of the attack in terms of increased congestion level:
\begin{align}
\calG(&\vD, \vA, \vM) \nonumber \\ &= \left(T_A(\vA) - T\right) \cdot \Delta_D(\vD, \vA) + \left(T_M(\vA, \vM) - T\right) \cdot \Delta_M ,
\end{align}
where $\Delta_M$ is the amount of time between mitigation and the transportation network returning to normal operation (e.g., manually resetting compromised devices).
The first term quantifies the impact of the attack before mitigation, while the second term quantifies impact after mitigation but before returning to normal operation.

Next, we define the defender's loss (i.e., negative utility) resulting from actions $(\vD, \vA, \vM)$.
Recall that the detectors deployed in the transportation network are imperfect, and each detector $i \in \calI_D$ is continuously generating false alerts (i.e., false-positive errors) at rate $D_i$.
Since the defender cannot tell which alerts are false, it has to investigate every single alert, which costs manpower and resources.
Hence, we define the defender's loss considering both the total impact of attacks and the cost of investigating false alerts:
\begin{equation}
\calL(\vD, \vA, \vM) = \calG(\vD, \vA, \vM) + \sum_{i \in \calI_D} D_i \cdot C ,
\end{equation}
where $C$ is the cost of investigating an alert.
The first term quantifies the total impact of the attack, while the second term captures the cost of investigating false alerts.

\subsection{Solution Concept and Problem Formulation}

We assume that both players have perfect information: in the second stage, the attacker knows the detector configuration $\vD$ chosen by the defender in the first stage; and in the third stage, the defender knows the attack action $\vA$ chosen by the attacker in the second stage.
We assume that the attacker has perfect information because we are considering a sophisticated, worst-case attacker, who has extensive knowledge of its target (i.e., Kerckhoffs's principle) and may know the algorithms or techniques employed by the defender for configuring the detectors.
On the other hand, we assume that the defender has perfect information because we are considering a smart transportation network with monitoring capabilities; however, this assumption could be relaxed.

\Aron{Check and revise if necessary!}
Under this assumption, we can model the players' optimal choices most naturally using the solution concept of \emph{subgame perfect equilibrium}.
Our goal is to find an optimal strategy for the defender by solving the game.
We can do so by finding the players' best-response actions for every stage using backward induction, that is, by solving each stage starting with the third and finishing with the first, in each stage building on the solutions for the subgame formed by the subsequent stages.
Given detector configuration $\vD$ and attack $\vA$, a best-response mitigation~is
\begin{equation}
\argmin_{\vM} \calL(\vD, \vA, \vM) .
\end{equation}
Note that the best-response mitigation does not actually depend on the detector configuration $\vD$.
\begin{revision}
To prove this, observe that the only term of $\calL(\vD, \vA, \vM)$ that depends on mitigation action $\vM$ is $\left(T_M(\vA, \vM) - T\right) \cdot \Delta_M$. Since this term does not depend on detector configuration $\vD$, neither does the best-response mitigation.
Intuitively, the explanation for this is that once the defender has detected the attack, it does not matter how it was detected (and all costs associated with detection are sunk).
\end{revision}

Given detector configuration $\vD$, a best-response attack is 
\begin{align}
&\argmax_{\vA} \calG(\vD, \vA, \vM) \,\big|_{\vM \in\, \argmin_{\vM'} \calL(\vD, \vA, \vM') } \nonumber \\
&=\argmax_{\vA} \calG(\vD, \vA, \vM) \,\big|_{\vM \in\, \argmin_{\vM'} \calG(\vD, \vA, \vM') + \sum_{i \in \calI_D} D_i \cdot C } \\
&=\argmax_{\vA} \calG(\vD, \vA, \vM) \,\big|_{\vM \in\, \argmin_{\vM'} \calG(\vD, \vA, \vM') } \\
&=\argmax_{\vA} \min_{\vM} \calG(\vD, \vA, \vM) .
\end{align}
Note that the attacker must anticipate the defender's mitigation action in the next stage; however, since the game is strategically equivalent to a zero-sum game, the attacker's problem simplifies to a maximin optimization.
\begin{revision}
Our threat model assumes a worst-case attacker, whose goal is to minimize the defender's utility.
This is a safe assumption since the defender's utility can only be higher if the attacker behaves differently.
In contrast, if our threat model assumed a particular attacker behavior, then the defender's strategy would be vulnerable to deviations from that assumed behavior. 
\end{revision}

Finally, an optimal (i.e., equilibrium) detector configuration is
\begin{align}
&\argmin_{\vD} \min_{\vM}  \calL(\vD, \vA, \vM) \,\big|_{\vA \in \argmax_{\vA'} \min_{\vM'} \calG(\vD, \vA', \vM') } \nonumber \\
&=\argmin_{\vD} \max_{\vA} \min_{\vM}  \calL(\vD, \vA, \vM) .
\end{align}
Note that the defender needs to anticipate the attacker's attack action in the next stage; however, since the game is strategically equivalent to a zero-sum game, the defender's problem simplifies to a minimax optimization (i.e., minimaximin if we consider the mitigation choice as well).

\section{Analysis}
\label{sec:analysis}

Even though we can express the players' optimal strategies as relatively simple maximin and minimax optimization problems, actually finding optimal strategies is computationally challenging due to the sizes of the players' strategy spaces.
Hence, we focus our analysis on the computational aspects of solving the transportation security game.
First, in Section~\ref{sec:complexity}, we show that finding an optimal action for the attacker is a computationally hard problem.
Although we study the complexity of only the attacker's problem in this paper, our computational-complexity argument could be easily extended to the defender's problem.
Due to the complexity of solving the game, we focus the remainder of this section on providing efficient heuristic algorithms:
we introduce a greedy heuristic for the attacker in Section~\ref{sec:greedy_attack},
and a metaheuristic search algorithm for the defender in Section~\ref{sec:search_configuration}.

\subsection{Computational Complexity}
\label{sec:complexity}

We begin our analysis by showing that the attacker's problem (i.e., finding a worst-case attack) is computationally hard.
Given a detector configuration $\vD$, the attacker's problem is to find an optimal attack~$\vA^*$ that maximizes $\min_{\vM} \calG(\vD, \vA^*, \vM)$.
Following the backward-induction approach, we assume that we have an oracle that finds the optimal mitigation action $\vM$ %minimizing the defender's loss 
for any attack  $\vA$, and we study the attacker's problem by building on this oracle. %will build our algorithm on this oracle.
In practice, the oracle can be replaced with, for example, a linear program for finding optimal traffic control.
Further, for ease of presentation, we overload the notation $\calG$ as follows
\begin{equation}
\calG(\vD, \vA) = \min_{\vM} \calG(\vD, \vA, \vM) .
\end{equation}

First, we formulate a decision version of the attacker's problem as follows.

\begin{definition}\emph{Attacker's Decision Problem}:
Given a transportation network, a budget $B$, a detector configuration  $\vD$, and a threshold gain $\calG^*$,
determine if there exists an attack~$\vA^*$ satisfying the budget constraint such that $\calG(\vD, \vA^*) > \calG^*$.
\end{definition}

We show that the above problem is computationally hard by reducing a well-known NP-hard problem, the Set Cover Problem, to the above problem.

\begin{definition}\emph{Set Cover Problem}:
Given a base set~$U$, a collection $\calC$ of subsets of $U$, and a number~$k$,
determine if there exists a subcollection $\calC' \subseteq \calC$ of at most $k$ subsets such that every element of~$U$ is contained by at least one subset in $\calC'$.
\end{definition}

The following theorem establishes the computational complexity of the attacker's problem.

\begin{theorem}
\label{thm:attackNPhard}
The Attacker's Decision Problem is $N\!P$-hard.
\end{theorem}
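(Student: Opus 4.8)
The plan is to prove $N\!P$-hardness by a polynomial-time reduction from the Set Cover Problem. Given a Set Cover instance $(U,\calC,k)$, I would build a transportation network (its cells and their parameters $Q_i,N_i,\delta_i$, source demands $d_i$, signalized intersections $\calI$ and default proportions $\vp$, and a detector deployment $\calI_D$), together with a detector configuration $\vD$, a budget $B$, and a threshold $\calG^*$, so that a feasible attack $\vA$ with $\calG(\vD,\vA)>\calG^*$ exists if and only if $U$ can be covered by at most $k$ sets of $\calC$. The heart of the construction is to introduce one signalized intersection $I_c$ per set $c\in\calC$ and set $B=k$, so that the attacker's choice of at most $B$ signals to compromise corresponds exactly to a choice of at most $k$ sets; and, for each element $u\in U$, to introduce a traffic flow whose vehicles can be kept from reaching their destination precisely when at least one of the intersections $\{I_c : u\in c\}$ is compromised and maximally reconfigured. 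To make each compromise ``fully effective,'' give every $I_c$ an auxiliary (dummy) predecessor carrying no demand, so that a compromising attacker can satisfy the feasibility constraint $\sum_k \hat{p}_{ki}=1$ while setting $\hat{p}_{ki}=0$ for all legitimate predecessors $k$, thereby blocking all flow through $I_c$. Away from these intended bottlenecks, choose $Q_i,N_i$ generously so that the network is never the binding constraint absent an attack; then the unattacked congestion is a baseline value $T$ computable in polynomial time by the cell-transmission LP.

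Next I would collapse the three-stage objective to a statement purely about post-attack congestion. Pick the detector deployment and configuration $\vD$ so that the detection delay is a fixed constant $\tau$ for \emph{every} attack (if the detector model does not permit this directly, attach a single dummy detector with this behavior); then the false-alarm cost $\sum_{i\in\calI_D}D_i\cdot C$ and the factor $\Delta_D(\vD,\vA)=\tau$ are the same for all attacks. Route each element's flow along a unique path, so the defender can neither reroute around a blocked intersection nor un-block a compromised one; hence $T_M(\vA,\vM)=T_A(\vA)$ for every $\vM$, the optimal-mitigation oracle is forced to return the do-nothing action (so the reduction does not rely on the oracle's complexity), and $\calG(\vD,\vA)=(\tau+\Delta_M)\,(T_A(\vA)-T)$. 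It then suffices to show that the attacker's best response is to block exactly the compromised intersections, and that doing so gives $T_A(\vA)=T+W\cdot\big|\{u\in U : u\in c \text{ for some } c\in\calI_A\}\big|+(\text{lower-order terms})$ for a fixed polynomially-large value $W$ (the extra travel time one blocked flow accumulates over $\tau$ intervals). Setting $\calG^*$ just below the value corresponding to ``all of $U$ covered'' then yields the equivalence: a feasible attack beats $\calG^*$ iff some $\calI_A$ with $|\calI_A|\le B=k$ covers $U$, iff $(U,\calC,k)$ is a yes-instance. The construction has size polynomial in $|U|+|\calC|$, so this completes the reduction.

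The main obstacle, and the step that needs real care, is the congestion estimate $T_A(\vA)\approx T+W\cdot(\#\text{covered})$ — i.e., making the map from ``which signals are compromised'' to post-attack congestion behave essentially as a strictly increasing function of the number of covered elements, with no cross-talk between flows. Two effects must be controlled: \emph{over-counting}, where one flow's backlog is charged more than once (for instance if it backs up past a merge and congests unrelated upstream flows), and \emph{under-counting}, where a covered flow is ``starved'' below $W$ because an upstream blockage has already thinned it. I would suppress over-counting by giving each flow's path buffer cells of capacity $\approx d\tau$, so that a blocked flow's backlog fills only its own buffers and only by the end of the attack window, never cascading into shared cells; and suppress under-counting by injecting large ``filler'' flows immediately downstream of each $I_c$ (routed to dedicated sinks), which keep the merge/split proportions at the intersections stable, so that a throttled flow still reaches its blocked intersection at a rate within a $(1\pm 1/\mathrm{poly})$ factor of $d$. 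Choosing $W$ large enough to dominate all the resulting lower-order terms (path-length contributions, the $1/\mathrm{poly}$ slack, and transients) then produces a clean gap between the gain of an attack that covers all of $U$ and that of any attack that misses an element, which is precisely what the threshold $\calG^*$ exploits.
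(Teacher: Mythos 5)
Your high-level plan --- a polynomial reduction from Set Cover --- is the same as the paper's, but the reduction you build is structurally different, and as sketched it has a gap that I do not think the buffer/filler gadgets can patch. The problem is the per-element ``unique path'' flows. In the model of this paper, congestion is computed by Ziliaskopoulos's single-commodity LP: vehicles are indistinguishable, and the flow out of any diverging cell is chosen by the LP to minimize total travel time. Your construction forces the flows of all elements $u\in c$ to merge into the single cell $I_c$; immediately afterwards they must diverge again onto their respective next legs. At that diverging point the LP is free to send \emph{any} of the merged vehicles down \emph{any} unblocked branch, so vehicles that ``belong'' to a covered element simply escape via an uncovered element's branch. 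This destroys the identity $T_A(\vA)\approx T+W\cdot(\#\text{covered})$, and it also falsifies the claim that $T_M(\vA,\vM)=T_A(\vA)$ for all $\vM$ (the defender's reconfiguration of uncompromised intersections downstream can further facilitate such escapes). Blocking upstream of a merge does work --- a predecessor with a single successor whose inflow proportion is zeroed traps its vehicles --- but the moment two element paths share an intersection and then separate, the single-commodity routing breaks your accounting. On top of this, the quantitative step you flag yourself (that the best response given $\calI_A$ is to fully block, and that the gain gap dominates all lower-order terms) is left open, so the proposal does not yet constitute a proof.

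For contrast, the paper's reduction inverts the correspondence precisely to avoid these issues: it sets $B=|U|$ (the budget is not the Set Cover parameter at all), uses a single source emitting $k+1$ vehicles, one diverging cell of capacity $Q_C=1$ per subset, and one merging cell per element, with the attacker choosing the inflow proportions $\hat{p}_{Cu}$ at every merging cell. The set cover emerges as the set of diverging cells that retain an ``enabled'' outgoing connection: if a cover of size $\le k$ exists, the attacker funnels all $k+1$ vehicles through $\le k$ unit-capacity cells and the total travel time exceeds $3(k+1)$; if not, every feasible (0/1-normalized) attack enables at least $k+1$ diverging cells and all vehicles traverse the $3$-hop network in exactly $3(k+1)$ time. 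Because all vehicles share one origin and one destination, the interchangeability of vehicles is a feature rather than a bug, no asymptotic gap or filler flows are needed, and the threshold is an exact count. If you want to salvage your direction of the reduction ($B=k$, compromised signals $=$ chosen sets), you would need gadgets in which no two element paths share any cell from which the LP could reroute --- which is in tension with sharing the intersections $I_c$ --- so I would recommend adopting the paper's inverted construction instead.
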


\begin{proof}
\begin{figure}
\centering
\begin{tikzpicture}[y=0.8,
  Cell/.style={draw, fill=white, rounded corners=0.25em, inner sep=0.5em, solid},
  Connection/.style={-{Stealth[scale=1.2]}},
]
\node (source) [Cell] {$r$};

\matrix (subsets) [right=of source, draw, dashed, row sep=12] {
\node (subsetsTop) [Cell] {}; \\
\node (subsetsSecond) [Cell] {}; \\
\node (subsetsDots) {$\vdots$}; \\
\node (subsetsBottom) [Cell] {}; \\
};
\node [above=0.2 of subsets] {$\calC$};
\node [below=0.2 of subsets] {\begin{tabular}{c}$\forall C \in \calC:$\\$Q_C = 1$\end{tabular}};

\matrix (elements) [right=2 of subsets, draw, dashed, row sep=12] {
\node (elementsTop) [Cell] {}; \\
\node (elementsSecond) [Cell] {}; \\
\node (elementsDots) {$\vdots$}; \\
\node (elementsBottom) [Cell] {}; \\
};
\node [above=0.2 of elements] {$U$};
\node [below=0.2 of elements] {\begin{tabular}{c}$\forall u \in U:$\\$Q_u = k + 1$\end{tabular}};

\node (sink) [Cell, right=of elements] {$s$};

\draw [Connection] (source) -- (subsetsTop);
\draw [Connection] (source) -- (subsetsBottom);
\draw [Connection] (subsetsTop) -- (elementsTop);
\draw [Connection] (subsetsSecond) -- (elementsSecond);
\draw [Connection] (subsetsSecond) -- (elementsBottom);
\draw [Connection] (subsetsBottom) -- (elementsBottom);
\draw [Connection] (elementsTop) -- (sink);
\draw [Connection] (elementsBottom) -- (sink);
\end{tikzpicture}
\caption{Illustration for the proof of Theorem~\ref{thm:attackNPhard}.}
\label{fig:attackNPhard}
\end{figure}
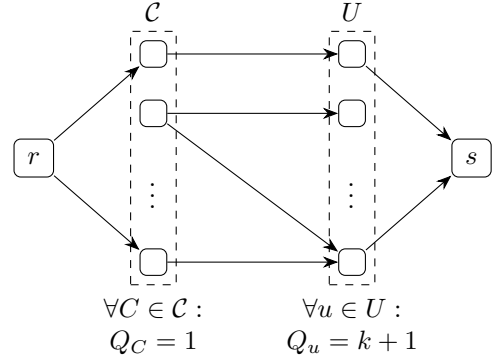

Given an instance of the Set Cover Problem (i.e., a set $U$, a collection $\calC$ of subsets, and a number~$k$), we construct an instance of the Attacker's Decision Problem as follows:
\begin{itemize}%[noitemsep,topsep=0pt,parsep=0pt,partopsep=0pt,leftmargin=1.5em]
\item let the transportation network be the following (see Figure~\ref{fig:attackNPhard} for an illustration):
\begin{itemize}%[noitemsep,topsep=0pt,parsep=0pt,partopsep=0pt,leftmargin=1em]
\item there is one source cell $r$, with $Q_r = k + 1$, $d_r^1 = k + 1$, and $d_r^t = 0$ for $t > 1$;
\item there is one sink cell $s$;
\item for every element $u \in U$, there is a merging cell~$u$;
\item for every subset $C \in \calC$, there is a diverging cell~$C$;
\item each diverging cell $C$ is connected to every merging cell $u \in C$;
\item for every cell $i$, $N_i = k + 1$ and $\delta_i = 1$;
\item for every merging cell $u$, $Q_u = k + 1$;
\item for every diverging cell $C$, $Q_C = 1$;
\end{itemize}
\item let the attacker's budget be $B = |U|$;
\item let the detector configuration be such that $\forall \vA: \Delta_D(\vD, \vA) \equiv 1$
\item let the default congestion be $T = 0$, let the congestion after the attack $T_A(\vA)$ be equal to the total travel time of the vehicles, and let the mitigation time be $\Delta_M = 0$;
\item let the threshold gain be $\calG^* = 3 (k + 1)$.
\end{itemize}
Clearly, the above reduction can be carried out in time that is polynomial in the size of the Set Cover Problem instance.

It remains to show that the above instance of the Attacker's Decision Problem has a solution $\vA^*$ if and only if the given instance of the Set Cover Problem has a solution $\calC'$.
Before we proceed to prove this equivalence, notice that the values $Q_r$, $N_i$ and $\delta_i$ for every cell $i$, and $Q_u$ for every merging cell $u$ will not play any role, since they are high enough to allow any traffic to pass through.
Furthermore, since $B = |U|$ and $\Delta_D \equiv 1$, the attacker will be able to reconfigure every traffic signal without decreasing detection time.
Hence, the attacker's problem is simply to pick the values $\hat{p}_{Cu}$ for every $u \in C$ so that the total travel time is at least $\calG^* = 3 (k + 1)$.

First, suppose that there exists a set cover $\calC'$ of size at most $k$.
Then, we construct an attack as follows: for every merging cell $u$, choose one diverging cell $C$ from $\calC'$ that is connected to $u$ (if there are multiple, then choose an arbitrary one), and let $\hat{p}_{Cu} = 1$.
We have to show that the total travel time in the transportation network is greater than $3 (k + 1)$ after the attack.
Since the distance between the source cell and the sink cell is 3 hops and there are $k + 1$ vehicles, all the vehicles must move one step closer to the sink in every time interval in order for the total travel time to be at most $3 (k + 1)$.
However, from the source cell, the vehicles may only move to the cells in~$\calC'$; otherwise, they would get ``stuck'' at one of the diverging cells that are not in $\calC'$.
Consequently, in the second time interval, at most $k$ of the $k + 1$ vehicles may move on, which means that the total travel time has to be greater than $3 (k + 1)$.

Second, suppose that there does not exist a set cover $\calC'$ of size at most $k$.
Then, we have to prove that there cannot exist an attack which increases the total travel time to more than $3 (k + 1)$. 
Firstly, we show that there exists an optimal attack which assigns either $0$ or $1$ to every $\hat{p}_{Cu}$.
To prove this, consider an attack in which there is a merging cell $v$ with a $\hat{p}_{Cv}$ value other than $0$ or $1$.
If none of its predecessor cells~$C$ has a positive $\hat{p}_{Cw}$ value for some other merging cell $w$, then the assignment for $v$ can clearly be changed to $0$ and $1$ values without changing the total  travel time.
Next, suppose that one (or more) of the predecessor cells $C$ of the merging cell has a positive $\hat{p}_{Cw}$ value for some other merging cell $w$. 
Then, the total travel time maximizing assignment is clearly one which assigns $\hat{p}_{Cv} = 1$ to a predecessor cell $C$ for which $\sum_{u \in C} \hat{p}_{Cu}$ is maximal, since this ``wastes'' the most ``merging capacity.''
Thus, for the remainder of the proof, it suffices to consider only attacks where every $\hat{p}_{Cu}$ value is either $0$ or $1$.

Now, consider an optimal attack $\vA^*$ against the transportation network, and let $\calC^*$ be the set of diverging cells $C$ for which there exists a merging cell~$u$ such that $\hat{p}_{Cu} = 1$.
Clearly, $\calC^*$ forms a set cover of~$U$ since for every element $u$, there is a subset $C \in \calC^*$ such that $u \in C$ (i.e., $C$ is connected to $u$).
From our initial supposition, it follows readily that the cardinality of set $\calC^*$ must be at least $k + 1$.
However, this also implies that the total travel time after the attack is equal to $3 (k + 1)$: in the second time interval, all $k + 1$ vehicles may move forward to the diverging cells in set $\calC^*$; in the third time interval, all the vehicles may again move forward to the merging cells (since every cell in $\calC$ has at least one ``enabled'' connection); and all the vehicles may leave the network by the next interval through the sink cell.
Since the total travel time after an optimal attack $\vA^*$ is equal to $T^* = 3 (k + 1)$, the attacker's problem does not have a solution.
Therefore, the constructed instance of the Attacker's Decision Problem has a solution if and only if the given instance of the Set Cover Problem has one, which concludes our proof.
\end{proof}

\subsection{Algorithms}
\label{sec:algorithms}

Mitigation---in our model---means adapting the schedule of uncompromised traffic signals given the schedule of compromised signals, which is equivalent to optimizing traffic control in a non-adversarial setting, with the compromised signals acting as fixed-schedule signals.
Since optimizing traffic control in non-adversarial settings has been studied in prior work, we focus on providing efficient algorithms for solving the first two stages of the game.

\subsubsection{Greedy Algorithm for Attacks}
\label{sec:greedy_attack}

Since the attacker's problem is $N\!P$-hard, we cannot hope for a polynomial-time algorithm that always finds a worst-case attack (unless $P = N\!P$).
Hence, to provide an alternative to computationally infeasible exhaustive search, we turn our attention to designing an efficient heuristic algorithm.

\begin{algorithm}[h!]
\caption{Polynomial-Time Greedy Heuristic for Finding an Attack}
\label{alg:heuristic}
\KwData{transportation network security game, detector configuration $\vD$}
\KwResult{attack $\vA^*$}
$\calI_A^* \gets \emptyset$,
$\hat{\vp}^* \gets \vp$ \;
\For{$b = 1, \ldots, B$}{
  $\calI_A' \gets \calI_A^*$,
  $\hat{\vp}' \gets \hat{\vp}^*$ \;
  \For{$i \in \calI$}{
    $\calI_A \gets \calI_A^* \cup \{i\}$ \;
    \For{$k \in \Gamma^{-1}(i)$}{      
      $\forall l, j: ~ \hat{p}_{lj} \gets \begin{cases}
        1 & \text{ if } j = i \wedge l = k \\
        0 & \text{ if } j = i \wedge l \in \Gamma^{-1}(i) \setminus \{k\} \\
        \hat{p}^*_{lj} & \text{ otherwise.}
      \end{cases}$ \;
      \If{$\calG(\vD, (\calI_A, \hat{\vp})) \geq \calG(\vD, (\calI_A', \hat{\vp}'))$}{
        $\calI_A' \gets \calI_A$,
        $\hat{\vp}' \gets \hat{\vp}$ \;
      }
    }
  }
  $\calI_A^* \gets \calI_A'$,
  $\hat{\vp}^* \gets \hat{\vp}'$ \;
}
\textbf{output} $\vA^* = (\calI_A^*, \hat{\vp}^*)$
\end{algorithm}

The attacker's problem can be viewed as the composition of two problems: finding a subset $\calI_A$ of at most $B$ signalized intersections and finding new inflow proportions $\hat{p}_{ki}$ for the cells $i \in \calI_A$.
For finding a subset $\calI_A$, we propose to use a greedy heuristic, which starts with an empty set and adds new cells to it one-by-one, always picking the one that leads to the greatest increase in the attacker's gain.
Finding new inflow proportions $\hat{p}_{ki}$ is particularly challenging, since the set of possible choices is continuous.
However, we observe that in most networks, the worst-case configuration is an ``extreme'' one, which assigns proportion $\hat{p}_{ki} = 1$ to one predecessor cell $k$ and proportion $\hat{p}_{ji} = 0$ to every other predecessor cell~$j$~\cite{laszka2016vulnerability}. 
\begin{revision}
In fact, we have tested this property on hundreds of networks that we generated according to the Grid model with Random Edges (see Section~\ref{sec:numerical_game}), resembling road networks from the U.S. and Europe, and we have not found a single counterexample.
Based on these numerical results, we conjecture that this property holds in general, that is, for any network.\footnote{\revision{We leave the theoretical proof of this claim for future work.}}\end{revision} %
%\footnote{In fact, this property holds for \emph{every} network that we have encountered so far. Proving analytically that this property holds for all networks is left for future work.}
Hence, for every new cell $i$ added to the set of attacked intersections, we propose to search over the possible extreme configurations by iterating over the predecessors of cell $i$.
Based on the above ideas, we formulate Algorithm~\ref{alg:heuristic}. 

It is fairly easy to see that we can implement Algorithm~\ref{alg:heuristic} as a polynomial-time algorithm.
Due to the three nested iterations, the running time of Algorithm~\ref{alg:heuristic} is $O\big(B \cdot |\calI|$ $\cdot \left(\max_{i \in \calI} |\Gamma^{-1}(i)|\right)\big)$ times the running time of computing~$\calG$.
Since %$B \leq |\calI|$ and 
we can compute $\calG(\vD, \vA)$ for any attack~$\vA$ using a linear program, it follows readily that the running time of the algorithm can be upper bounded by a polynomial function of the input size (i.e., size of the transportation network and budget~$B$).
We can formally state this observation as the following proposition.

\begin{proposition}
With a polynomial-time oracle for computing $\calG$,
the running time of Algorithm~\ref{alg:heuristic} is a polynomial function of the input size.
\end{proposition}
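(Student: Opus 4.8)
The plan is a routine accounting of the nested loops in Algorithm~\ref{alg:heuristic}, combined with the hypothesis that each evaluation of $\calG$ costs polynomial time. First I would observe that the algorithm consists of three nested \textbf{for} loops: the outermost runs $B$ times, the middle one runs $|\calI|$ times, and the innermost one runs $|\Gamma^{-1}(i)| \le \max_{i \in \calI} |\Gamma^{-1}(i)|$ times. Hence the body of the innermost loop is executed at most $B \cdot |\calI| \cdot \max_{i \in \calI} |\Gamma^{-1}(i)|$ times in total, plus an $O(|\calI|)$ overhead for the outer-loop bookkeeping (copying $\calI_A^*$, $\hat{\vp}^*$, etc.).

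Next I would bound the work performed in one execution of the innermost loop body. This consists of: (i) constructing the proportion vector $\hat{\vp}$ according to the case distinction, which amounts to copying $\hat{\vp}^*$ and overwriting the entries associated with intersection $i$ --- at most $O\big(\sum_{j \in \calI} |\Gamma^{-1}(j)|\big)$ operations, where every newly written value is the constant $0$ or $1$; (ii) one (or two) invocations of the oracle for $\calG$, each costing polynomial time by assumption; and (iii) a constant number of comparisons and assignments of objects whose size is at most that of the input. Each of these is polynomial in the input size, so the per-iteration cost is polynomial.

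Finally I would combine the two bounds. Since $B \le |\calI|$ by the definition of the attacker's budget, and $|\calI|$ and $\max_{i \in \calI}|\Gamma^{-1}(i)|$ are each at most the size of the transportation-network description, the iteration count $B \cdot |\calI| \cdot \max_{i \in \calI} |\Gamma^{-1}(i)|$ is polynomial in the input size; multiplying by the polynomial per-iteration cost and adding the negligible outer-loop overhead yields a polynomial overall running time, which is exactly the claim.

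The only point requiring care --- rather than a genuine obstacle --- is ensuring that representation sizes do not grow across iterations: because every reassignment either copies an existing entry of $\hat{\vp}^*$ or writes the constant $0$ or $1$, the bit-lengths of all stored proportions stay bounded by the input size, so no super-polynomial blow-up can creep in through the numbers themselves. With that checked, the proposition follows.
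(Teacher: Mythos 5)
Your proposal is correct and follows essentially the same route as the paper: counting the three nested loops to get the $O\big(B \cdot |\calI| \cdot \max_{i \in \calI} |\Gamma^{-1}(i)|\big)$ bound on the number of oracle calls and multiplying by the polynomial cost of each call. Your extra check that the stored proportions stay at constant bit-length is a nice bit of added rigor, but it does not change the argument.
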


\subsubsection{Metaheuristic Search Algorithm for Detector Configuration}
\label{sec:search_configuration}

Next, we present an algorithm for finding a detector configuration (i.e., false-positive rates) based on a metaheuristic approach.
In particular, we use \emph{simulated annealing} to find a near-optimal detector configuration $\vD$. 
The basic idea  of this approach is to start with an arbitrary configuration $\vD$, which we then improve iteratively. 
In each iteration, we generate a new solution $\vD'$ in the neighborhood of $\vD$. 
If the new configuration $\vD'$ is better in terms of minimizing the defender's loss (against an attacker playing a best response), then the current configuration $\vD$ is replaced with the new one. 
On the other hand, if the new configuration $\vD'$ increases the defender's loss, then new configuration replaces the current one with only a small probability. 
This probability depends on the difference between the two solutions in terms of loss as well as a parameter commonly referred to as ``temperature,'' which is a decreasing function of the number of iterations. 
These random replacements prevent the search from ``getting stuck'' in a local minimum. 
The algorithm is presented below as Algorithm~\ref{alg:sim_ann}. 

\begin{algorithm}[h!]
\caption{Polynomial-Time Metaheuristic for Finding a Detector Configuration}
\label{alg:sim_ann}
\KwData{transportation network security game, iterations $k_{\max}$, initial temperature $T_0$, cooling parameter $\beta$} 
\KwResult{detector configuration $\vD^*$}
$\vD \gets \vect{1}$ \;
%$\vA \gets \argmax \calG(\vA)$ \;
$L \gets \max_{\vA} \calG(\vD, \vA) + \sum_{i \in \calI_D} D_i \cdot C$ \;
\For{$k = 1, \ldots, k_{\max}$}{
  $\vD' \gets \mathtt{Perturb}(\vD)$ \;
  %$A' \gets \mathtt{Best\_Response\_Attack}(\boldsymbol{f}')$
  $L' \gets \max_{\vA} \calG(\vD, \vA) + \sum_{i \in \calI_D} D'_i \cdot C$ \;
  $T \gets T_0 \cdot e^{-\beta k}$ \;
  $pr \gets e^{(L' - L) / T}$ \;
  \If{$(L' < L) \; \vee \; (\mathtt{rand}(0, 1) \le pr)$}{
    $\vD \gets \vD'$,  $L \gets L'$ 
  }
}
\textbf{output} $\vD$
\end{algorithm}

In Algorithm \ref{alg:sim_ann}, $\mathtt{Perturb}(\vD)$ picks a random configuration $\vD'$ from the neighborhood of $\vD$. 
In particular, we implement $\mathtt{Perturb}(\vD)$ as choosing a value for each $D_i'$ uniformly at random from $[D_i \cdot (1 - \varepsilon), D_i \cdot (1 + \varepsilon)]$, where~$\varepsilon$ is a small constant (e.g., $0.1$).
For solving $\max_{\vA} \calG(\vD, \vA)$ in practice, we can use the greedy heuristic (Algorithm~\ref{alg:heuristic}).
The temperature $T$ is decreasing exponentially with iteration number $k$, and the rate of the decrease is controlled by the ``cooling'' parameter $\beta$.
Finally, we note that a simpler algorithm could also be obtained, in which $\vD$ is updated with $\vD'$ in each iteration if and only if $\vD'$ is strictly better than $\vD$. 
This heuristic search, commonly known as \textit{hill climbing}, also works well for our problem; however, Algorithm \ref{alg:sim_ann} gives better results.

\section{Anomaly-based Detector}
\label{sec:detector}

Now, we introduce a traffic-anomaly based detector against stealthy attacks that tamper with traffic control.
The core idea of anomaly-based detection is to build a probabilistic model of normal traffic conditions, which can then be used to estimate the likelihood that observed traffic conditions are normal.
\begin{revision}
Note that we must employ a probabilistic model to account for the uncertainty in parameter values since many parameters (e.g., traffic demand) can only be estimated in practice.
\end{revision}
We can estimate the likelihood that the observed traffic is normal as the probability that our model of normal traffic would generate the observed traffic.
We can then compare the likelihood value to a threshold, and if the likelihood is lower, we can raise an alarm.
In our detector, the model of normal traffic is based on Gaussian processes, which have been successfully used in prior work for traffic volume forecasting~\cite{xie2010gaussian,chen2012decentralized}.
\begin{revision}
Note that we cannot use macroscopic traffic models, such as the cell transmission model, to detect attacks because these models abstract  away details for the sake of tractability (e.g., inflow proportions instead of actual traffic light schedules); however, such details can be crucial for the detection of stealthier attacks that alter traffic control only slightly.
\end{revision}

\subsection{Gaussian Processes}

We begin giving a very brief overview of Gaussian processes. 
For a comprehensive discussion of Gaussian processes in machine learning, we refer the reader to~\cite{rasmussen2006gaussian}.

In principle, Gaussian processes are an extension of multivariate Gaussian distributions to infinite collections of random variables.
Formally, a Gaussian process is a stochastic process such that any finite collection of variables $(X_1, \ldots, X_n)$ follows a multivariate Gaussian distribution.
A Gaussian process is typically described using a \emph{mean function} 
\begin{equation}
\mu(X) = \mathbb{E}(X)
\end{equation}
and a \emph{covariance function} 
\begin{equation}
k(X_1, X_2) = \mathbb{E}\left[(X_1 - m(X_1)) (X_2 - m(X_2))\right] .
\end{equation}
The covariance function is often chosen to be some well-known kernel function, such as squared exponential, whose parameters can be estimated from a training dataset $(x_1,$ $\ldots, x_n)$.
A common application of Gaussian processes is regression: given the values of a set of training variables $(x_1, \ldots, x_n)$, we can easily compute the expected value and variance of a target variable $Y$ using the mean and covariance functions.
Gaussian-process based regression models have been successfully applied to a wide range of problem, such as 
traffic volume forecasting~\cite{xie2010gaussian,chen2012decentralized},
spatial modeling of extreme snow depth~\cite{blanchet2011spatial},
wind power forecasting~\cite{kou2013sparse},
estimation of water chlorophyll concentration~\cite{bazi2012improved}, and
spectrum sensing~\cite{nevat2012location}.

%Here, we summarize the basics of Gaussian processes that are necessary for our paper.
%For a comprehensive discussion of Gaussian processes in machine learning, we refer the reader to~\cite{rasmussen2006gaussian}.

%Let $\sigma^2_Y$ denote the variance of a random variable $Y$,
%$\sigma^2_{Y|\calV}$ denote the variance of a random variable $Y$
%given the values of variables in a set $\calV$,
%$K(X, Y)$ %$\Sigma_{Y\!X}$ 
%denote the covariance between random variables $Y$ and $X$,
%$\vSigma_{\calV\calW}$ denote the submatrix of $\vSigma$ formed by the rows in set
%$\calV$ and the columns in set $\calW$.

%\TODO{Finish!}

\subsection{Model}

% note that we build a model for a part of the network (e.g., an intersection) because modeling an entire network would be computationally challenging (not scalable)
% outputs of these models can then be combined

We assume that traffic sensors, such as induction loop sensors, have been deployed for monitoring the transportation network.
Since modeling an entire network would be computationally challenging---and would certainly not scale well---we divide sensors into subsets, and we build a separate model and detector for each one of these subsets.
For example, traffic sensors that are deployed next to the same intersection $i \in \calI$ may be grouped together and provide traffic data for one detector (see, e.g.,  Figure~\ref{fig:intersection} in Section~\ref{sec:case-detector}).
The outputs of all the detectors deployed in a transportation network can then be combined together to form a single detector for the entire network.

We assume that sensors measure and report traffic values, such as traffic flow or occupancy, in fixed-length intervals (e.g., report one measurement value for every 15-minute intervals).\footnote{Note that the length of these measurement intervals is independent of the time intervals of the cell-transmission model. However, for ease of presentation, we will reuse notation $t$ to identify measurement intervals.}
In our Gaussian-process model, we model each one of these measurements as a random variable. 
Formally, for every sensor $s$ and time interval $t$, there exists a random variable $X_s^t$ whose value is equal to the traffic value measured by sensor $s$ in interval $t$.
Hence, our model has a discrete but---due to the time dimension---potentially infinite set of variables.

A key part of modeling is establishing mean and covariance functions for these variables.
For each sensor $s$, we model the mean values $\mu(X_s^t)$ of variables $X_s^t$ using a periodic function: 
\begin{equation}
\mu\left(X_s^t\right) \equiv \mu\left(X_s^{t + P}\right) ,
\end{equation}
where $P$ is the length of the period.
We call this time period, which is measured in number of discrete time intervals, the \emph{detector window} $W$.
Similarly, for sensors $s_1$ and $s_2$, we model the covariance values $k\left(X_{s_1}^{t_1}, X_{s_2}^{t_2}\right)$ between variables $X_{s_1}^{t_1}$ and $X_{s_2}^{t_2}$ using a periodic function: 
\begin{equation}
k\left(X_{s_1}^{t_1}, X_{s_2}^{t_2}\right) \equiv k\left(X_{s_1}^{t_1 + P}, X_{s_2}^{t_2 + P}\right) .
\end{equation}
Further, we assume that $k\left(X_{s_1}^{t_1}, X_{s_2}^{t_2}\right) \equiv 0$ if $|t_1 - t_2| > P$.
To train the model, the actual values of these functions must be learned from traffic values observed during normal operation.

\subsection{Training}

% goal of training: find model parameters based on normal traffic
Before training, 
our model has $S \cdot P + \frac{S \cdot (S - 1) \cdot P \cdot (2P + 1)}{2} + S \cdot P \cdot (2P + 1)$ unknown values, where $S$ is the number of sensors:
\begin{itemize}
\item $S \cdot P$ mean values: for each sensor $s$, the  mean function $\mu$ can be described by $P$ values since its period length is $P$; 
\item $\frac{S \cdot (S - 1) \cdot P \cdot (2P + 1)}{2}$ covariance values: for each distinct pair of sensors $s_1$ and $s_2$, the covariance function $k$ can be described by $P \cdot (2 \cdot P + 1)$ values since its period length is $P$, the maximum difference between $t_1$ and $t_2$ is $P$, and covariance values are symmetric; 
\item and  
$S \cdot P \cdot (2P + 1)$ variance values: for each sensor $s$, the covariance function $k$ can be described by $P \cdot (2 \cdot P + 1)$.
\end{itemize}
% reference numerical results section
Training the model means learning these mean and covariance values for normal traffic. 
In practice, we can train the model by observing sensor measurements $(x_{s_1}^{t_1}, x_{s_1}^{t_2}, x_{s_2}^{t_1}, \ldots)$ of traffic under normal conditions, and then simply estimating the most likely mean and covariance values from these observations (i.e., maximum likelihood estimation).

\subsection{Detection}

Once we have trained the model, we can use it to detect attacks against traffic control.
First, we take sensor measurements $(x_s^t, \ldots)$, which are observed in the network that might be under attack, and we use the Gaussian process to compute the likelihood of these measurement values being generated by our model of normal traffic. 
\footnote{Due computational limitations, we restrict detection to observations from a single detector window (i.e., measurement values from some range $(t, t + P - 1)$). 
If more observations are available, we can evaluate the detector multiple times.}
Since the measurement values are continuous, we can use the probability density of the Gaussian distribution $(X_s^t, \ldots)$ at $(x_s^t, \ldots)$ as the likelihood value.
% observed values and use the Gaussian process to compute the probability of the observed values.
We then interpret this likelihood as the likelihood of the network operating under normal control (i.e., not being under attack). % as the likelihood that the observed values were the result of normal traffic conditions.
Finally, we compare the likelihood to a \emph{detector threshold} $\tau_i$, and raise an alarm if the likelihood is lower than the threshold.
Thus, our detector has two parameters, the detector window $W$ and the detector threshold $\tau_i$, which together determine the rate of false alarms $D_i$ and the detection delay.
\begin{revision}
Note that the detector threshold~$\tau_i$ and the rate of false alarms $D_i$ are closely related to each other: lower thresholds result in fewer false alarms since more observations are accepted as likely; and vice versa.
Hence, we can express $\tau_i(D_i)$ and $D_i(\tau_i)$ as increasing functions, and we may specify the configuration of the detector either as the desired false-positive rate $D_i$ or as the threshold $\tau_i$.
We chose to use representation $D_i$ in our game-theoretic model and analysis for ease of presentation. 
The exact relation $\tau_i(D_i)$ can be determined experimentally by evaluating the detector with various configurations on normal~traffic.
\end{revision}

\section{Numerical Results}
\label{sec:numerical}

In this section, we present numerical results on our heuristic algorithms from Section~\ref{sec:analysis} and our anomaly-based detector from Section~\ref{sec:detector}.

\subsection{Anomaly-based Attack Detection}
\label{sec:case-detector}

We begin by training and evaluating our detector based on simulated flows of traffic under normal conditions and under attacks.
We will then use the results of this evaluation to instantiate our game-theoretic model.
In particular, we will use the measured false-alarms rate and detection delay values as numerical parameters for our game-theoretic model.

\subsubsection{Setup}

\begin{figure}[t!]
\centering
\includegraphics[width=\columnwidth]{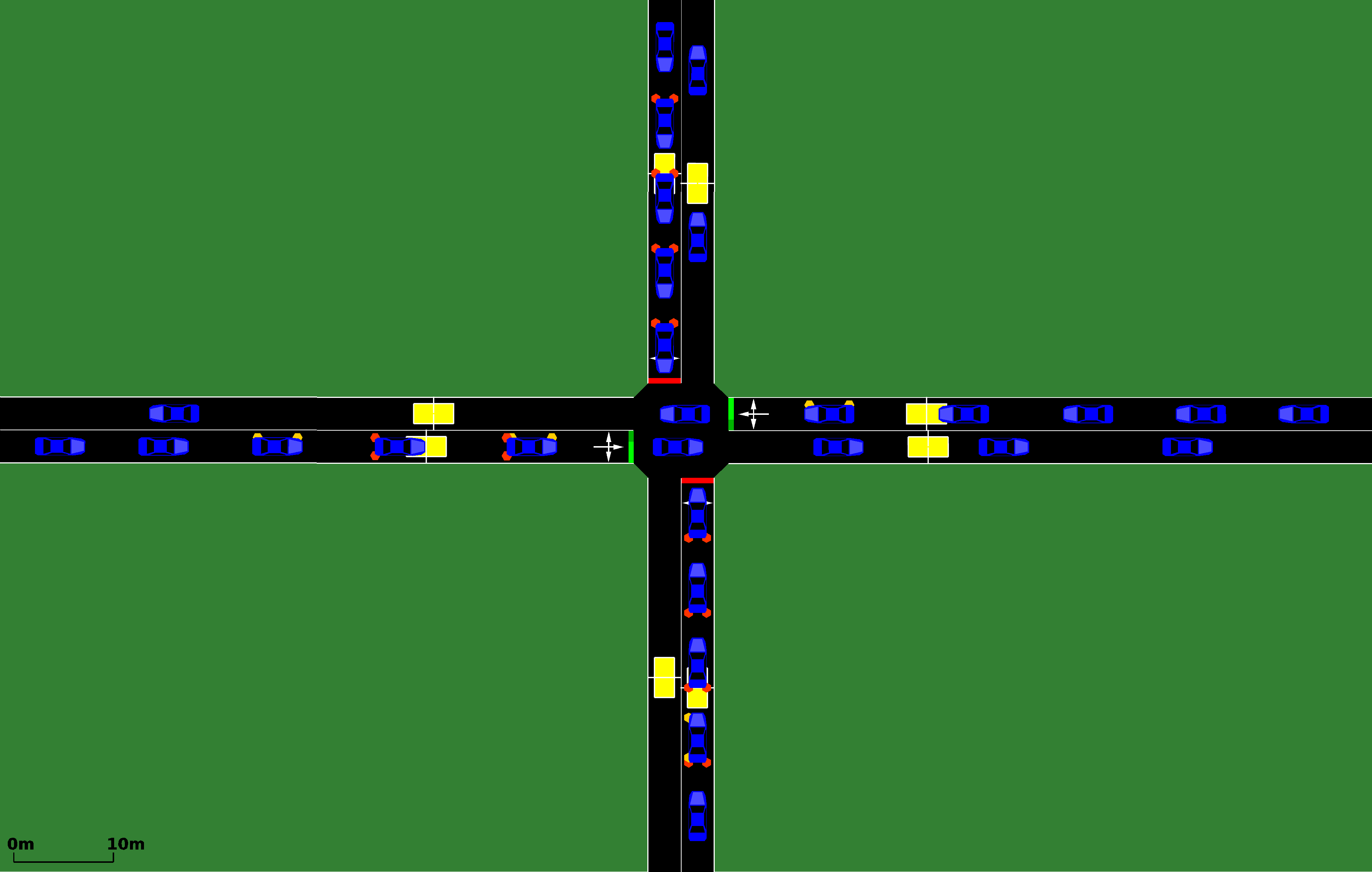}
\caption{Intersection used for evaluating the detector. Yellow rectangles represent induction-loop sensors.}
\label{fig:intersection}
\end{figure}

For these experiments, we simulate the signalized four-way intersection shown by Figure~\ref{fig:intersection} using SUMO (Simulation of Urban MObility)~\footnote{\url{http://sumo.dlr.de/wiki/Main_Page}}, a well-known and widely-used micro simulator~\cite{krajzewicz2002sumo,behrisch2011sumo}.
Signals are deployed on both the incoming and outgoing lanes (represented by yellow rectangles in the figure), and these signals measure traffic flow in 15-second intervals.
We generate one month of traffic data with the original signal schedule (45 seconds for both roads, including left-turning and yellow phases) for training the Gaussian-process based model.
In these simulations, vehicles enter the intersection from all four directions, and each car turns left, continues straight, or turns right with probability 5.3\%, 73.7\%, and 21.1\%, respectively.
From each direction, 0.19 vehicles arrive each second on average. 
We also generate one month of test traffic-data with the original schedule for measuring the false-positive rate of the detector.
Finally, for each of the attacks considered below, we generate one day of traffic data, which includes 1 hour with the original schedule and then 23 hours with the tampered schedule.

\begin{revision}
\paragraph{Posterior Predictive Check}
To confirm that our Gaussian-process based traffic model fits observations well, we perform \emph{posterior predictive checking}~\cite{gelman2013bayesian}.
The idea of posterior predictive checking is to draw simulated samples from the joint posterior predictive distribution and compare them to the observed sample.
If there were significant systematic differences between the simulated and observed samples, that would indicate that our model did not fit well.
Note that the applicability of our model is also demonstrated by the low false-positive rate exhibited by our detector.

\begin{table*}[h]
\centering
\begin{revision}
\caption{Posterior Predictive Checking $p$-Values}
\label{tab:postpred}
\begin{tabular}{|l|rrrrr|}
\hline
Sensor & \multicolumn{5}{c|}{Test statistic $T$} \\
 & mean & variance & median & quantile $Q(0.3)$ & quantile $Q(0.7)$ \\
\hline
east in & 0.54 & 0.52 & 0.54 & 0.79 & 0.49 \\
east out & 0.51 & 0.47 & 0.66 & 0.78 & 0.36 \\
south in & 0.46 & 0.53 & 0.49 & 0.79 & 0.26 \\
south out & 0.49 & 0.48 & 0.66 & 0.79 & 0.28 \\
west in & 0.54 & 0.52 & 0.55 & 0.80 & 0.49 \\
west out & 0.50 & 0.47 & 0.66 & 0.78 & 0.36 \\
north in & 0.47 & 0.52 & 0.64 & 0.84 & 0.57 \\
north out & 0.49 & 0.48 & 0.66 & 0.79 & 0.28 \\
\hline
\end{tabular}
\end{revision}
\end{table*}

The significance of difference can be quantified as the classical $p$-value~\cite[Chapter~6.3]{gelman2013bayesian}:
\begin{equation}
    p = \Pr\left[ T\left(x^{\textnormal{rep}}\right) \geq T\left(x\right) \,\middle|\, \mu, k\right] ,
\end{equation}
where $x^{\textnormal{rep}}$ is the replicated data generated according to our model, $x$ is the observed data, $\mu$ and $k$ are the model parameters (see Section~\ref{sec:detector}), and $T$ is a test statistic.
In our checks, we compute $p$ values for various standard test statistics, including mean, variance (with Bessel's correction), median, and quantiles.
Note that a perfectly fitting model will yield $p$-values around $0.5$.
To reliably estimate the probability $p$, we generate and evaluate the test statistics on 10,000 replicated samples (drawn independently according to our trained model).

Table~\ref{tab:postpred} shows the $p$-values for various statistical tests ($p$-values around $0.5$ indicate a perfect fit).
Since our samples are multidimensional (i.e., one value for each sensor around the intersection), we apply the statistical tests to the marginal distributions corresponding to the individual sensors.
We list sensors in clockwise order, starting with the sensors on the incoming and outgoing lanes of the road eastward of the intersection (see Figure~\ref{fig:intersection}), denoted `east in' and `east out.'
For mean and variance statistics, we see that our model produces an almost perfect fit for all sensors, which is important since these play key role in determining likelihood, on which our detector is built.
Further, we see that our model produces a good fit for the median statistic as well, which indicates that there is no significant asymmetry that could not be captured by our model.
Finally, we test quantiles $Q(0.3)$ and $Q(0.7)$, and see a reasonable fit for most sensors.
The worst fit is for sensor `north in' (i.e., inward lane of the road northwards) and $Q(0.3)$; however, we see an almost perfect fit for the same sensor for $Q(0.7)$, indicating a skewed distribution.
Next, we study the detection performance of our model, showing that observations under normal conditions and under attacks exhibit significantly different likelihood values.
\end{revision}

\subsubsection{Detector Configuration}
We first consider the defender's problem of balancing the number of false-positive errors and the detection delay.
For this experiment, we assume an attack which changes 4.4\% of the traffic-signal schedule.

\begin{figure}[h!]
\centering
\begin{tikzpicture}
\begin{axis}[
  width=\columnwidth,
  height=0.8\columnwidth,
  font=\small,
  xmin=5,
  xmax=70,
  ymax=800,
  xlabel={Detection delay [minutes]},
  ylabel={Number of false positives [per month]},
  grid=both,
  xtick={10, 20, 30, 40, 50, 60},
]
\addplot[] table[x=delay_minutes, y=false_positives, comment chars={\%}, col sep=comma] {software/anomaly_detection/plots/pareto_detector_4.csv};
\end{axis}
\end{tikzpicture}
\caption{Trade-off between false-positive rate and detection delay.}
\label{fig:detector_tradeoff}
\end{figure}
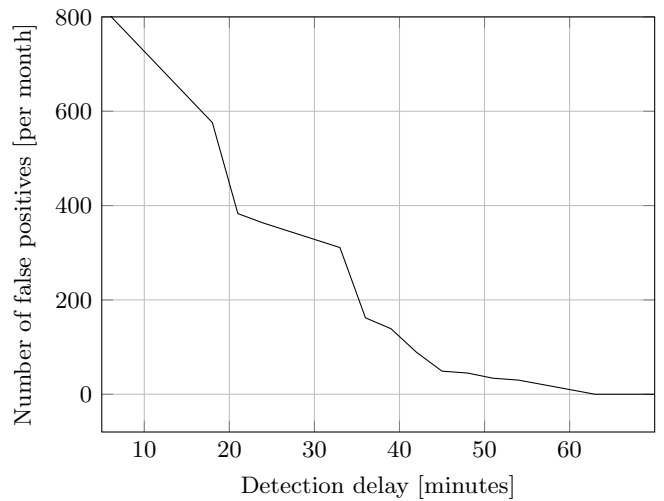

Figure~\ref{fig:detector_tradeoff} shows the trade-off between the false-positive rate and the detection delay.
Each point on the curve is a Pareto optimal point that is attainable with some detector window $W$ and threshold~$\tau$.
The figure shows that with a negligible false-positive rate, even the stealthy attack considered in this example can be detected in approximately one hour.
The configuration of the detector when the false-positive rate reaches zero is detector window being equal to $W = 48$ minutes and log-likelihood threshold being equal to $\ln \tau = -112.58$.

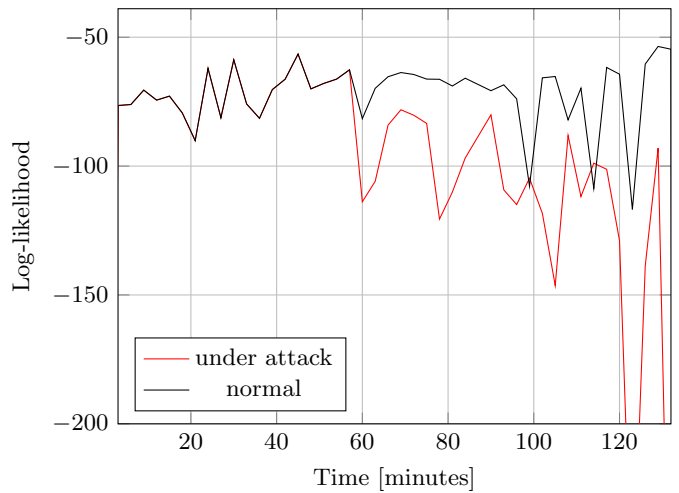
\begin{figure}[h!]
\centering
\begin{tikzpicture}
\begin{axis}[
  width=\columnwidth,
  height=0.8\columnwidth,
  font=\small,
  xmin=3,
  xmax=132,
  xlabel={Time [minutes]},
  ylabel={Log-likelihood},
  grid=both,
  legend pos=south west,
  ymin=-200,
]
\addplot[red] table[x expr=\thisrowno{0}*3, y=4, comment chars={\%}, col sep=comma] {software/anomaly_detection/results/likelihoods.csv};
\addlegendentry{under attack};
\addplot[] table[x expr=\thisrowno{0}*3, y=0, comment chars={\%}, col sep=comma] {software/anomaly_detection/results/likelihoods.csv};
\addlegendentry{normal};
\end{axis}
\end{tikzpicture}
\caption{Likelihood values output by the Guassian-process based model.}
\label{fig:detector_likelihood}
\end{figure}

Figure~\ref{fig:detector_likelihood} shows the likelihood values output by the Gaussian-process model for traffic data resulting from the original and the attacked traffic-signal schedules.
For this figure, we set the detector window to be $W = 3$ minutes, which results in highly variable likelihood values.
The figure shows that after one hour (i.e., when the attack starts), the likelihood values for the tampered schedule become much lower than for the original one.
In other words, the detector correctly estimates that the traffic with tampered schedule is less likely to be~normal.

\subsubsection{\revision{Stealthy} Attacks}
Next, we consider the attacker's problem of balancing stealthiness and impact. \begin{revision}
If stealthy attacks could avoid detection for extended periods of time while having substantial impact on traffic, they could pose a significant threat to the transportation network. 
To show that stealthy attacks are not effective against our detector, we compare a wide range of attacks, from stealthy ones that change control only slightly to non-stealthy ones that change control fundamentally.
To maximize the advantage of stealthiness, we consider a low detection threshold, which allows attacks to remain undetected for long periods of time.
In particular,
\end{revision}
for this experiment, we assume a detector window of $W = 48$ minutes and a log-likelihood threshold of $\ln\tau = -112.58$, which result in zero false-positive rate for the one-month test interval (see the discussion of Figure~\ref{fig:detector_tradeoff}).

\begin{figure}[h!]
\centering
\begin{tikzpicture}
\begin{axis}[
  width=0.94\columnwidth,
  height=0.8\columnwidth,
  font=\small,
  axis y line*=left,
  xlabel={Attack magnitude [\% of signal schedule changed]},
  ylabel={Detection delay [minutes]},
  grid=both,
]
\addplot[] table[x=attack, y=delay, comment chars={\%}, col sep=comma] {software/anomaly_detection/plots/attacks.csv};
\end{axis}
\begin{axis}[
  width=0.94\columnwidth,
  height=0.8\columnwidth,
  font=\small,
  axis y line*=right,
  ylabel={Impact rate [\% of traffic blocked]},
  legend style={at={(0.5,0.97)},anchor=north},
    xtick={},
]
\addlegendimage{no markers};
\addlegendentry{delay};
\addplot[dashed] table[x=attack, y=impact, comment chars={\%}, col sep=comma] {software/anomaly_detection/plots/attacks.csv};
\addlegendentry{impact};
\end{axis}
\end{tikzpicture}
\caption{Impact and detection delay for various attacks.}
\label{fig:detector_attacks}
\end{figure}
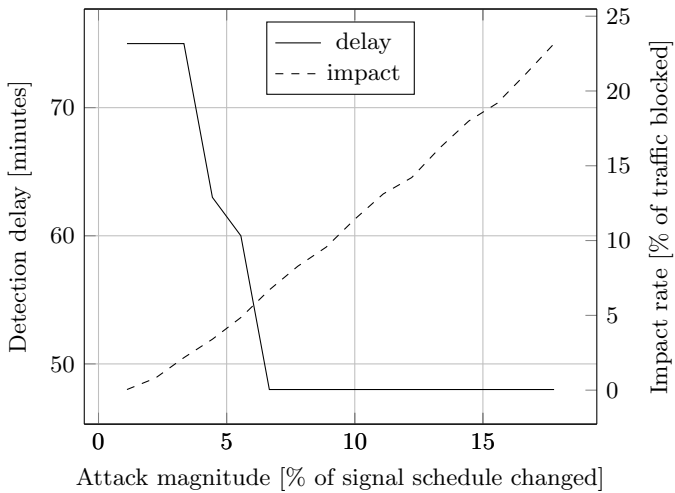

Figure~\ref{fig:detector_attacks} shows detection delay and impact for attacks of various magnitudes.
The impact of an attack is measured as the fraction of traffic that is ``blocked'' by the attack, i.e., the decrease in the number of vehicles passing through the intersection compared to the normal traffic-signal schedule;
the magnitude of an attack is the fraction of the traffic-signal schedule that is modified by the attacker.
The figure shows that attacks with higher magnitude may be less stealthy (i.e., detected earlier), but they cause much more significant impact.
In fact, the total impact of attacks, measured as the number of vehicles that could not pass through the intersection due to the attack until its detection, is a strictly increasing function of the attack magnitude.
This means that our detector can make stealthy attacks essentially pointless in this example.

\subsection{Multi-Stage Security Game Strategies}
\label{sec:numerical_game}

Next, we provide numerical results on the algorithms that we proposed in Section~\ref{sec:algorithms} for finding strategies in practice.
We first compare the proposed greedy heuristic for finding attacks (Algorithm~\ref{alg:heuristic}) to an exhaustive search, and then study the metaheuristic search algorithm for finding detector configurations (Algorithm~\ref{alg:sim_ann}).

\subsubsection{Setup}

To provide meaningful numerical results, we have to evaluate our algorithms on a large number of transportation networks.
Every point plotted in the figures of this subsection represents a mean value computed over a large number of random networks with the same parameters. 
To obtain these networks, we use the Grid model with Random Edges (GRE) to generate random network topologies~\cite{peng2012random}, which closely resemble real-world transportation networks.
For a detailed description of this model, we refer the reader to~\cite{peng2012random,peng2014random}.

We set both the width and height of the generated grids to be 4, and let the bottom-left corner be a source and the upper-right corner be a sink.
For the parameters controlling the randomness of the generation, we use the values from~\cite{peng2012random}, which were derived from measurements on actual road networks from the USA.
We let the inflow at the source cell be $d^0 = 8$, $d^1 = 12$, $d^2 = 8$, and $d^{t} = 0$ for $t \geq 3$.
For every other cell $i$, we let the parameters be $Q_i = 6$, $\delta_i = 1.0$, and $N_i = 10$. 
Finally, we let every merging cell be a signalized intersection, and optimize the inflow proportions for every intersection using a linear program.

We assume that there is an anomaly detector deployed in each intersection (i.e., $\calI_D = \calI$).
We also assume that every one of these detectors exhibits the false-positive rate and detection delay characteristics observed in Section~\ref{sec:case-detector}.
In other words, for each intersection, the defender chooses one of the Pareto optimal configurations that were identified in the experiments of Section~\ref{sec:case-detector} by choosing a false-positive rate $D_i$; the delay of this detector is then determined by the magnitude of the attack against the corresponding intersection.
Finally, we assume that the attack is detected as soon as one detector raises an alarm, that attack mitigation takes $\Delta_M = 20$ minutes, and that congestion levels $T$, $T_A$, and $T_M$ are measured in travel time.

\subsubsection{Attacks}

We begin by comparing the greedy attack heuristic to an exhaustive search.
To perform an exhaustive search, we quantize the space of possible schedules for each intersection, so that we have a finite and discrete search space.
For this experiment, we assume that the defender uses a detector configuration that sets the false-positive rate of every intersection $i \in \calI_D$ to $D_i = 1$.

\begin{figure}[h!]
\centering
\begin{tikzpicture}
\begin{axis}[
  width=\columnwidth,
  height=0.8\columnwidth,
  font=\small,
  xlabel={Budget $B$},
  ylabel={Impact $\calG$},
  xtick={1,2,3,4},
  grid=both,
  legend pos=south east,
]
\addplot[dashed] table[x=budget, y=exhaustive.impact, comment chars={\%}, col sep=comma] {software/multi_stage_strategies/plots/attack_algorithms.csv};
\addlegendentry{exhaustive search};
\addplot[] table[x=budget, y=greedy.impact, comment chars={\%}, col sep=comma] {software/multi_stage_strategies/plots/attack_algorithms.csv};
\addlegendentry{greedy (Algorithm~\ref{alg:heuristic})};
\end{axis}
\end{tikzpicture}
\caption{Impact resulting from attacks found by exhaustive and greedy searches.}
\label{fig:attack_impact}
\end{figure}
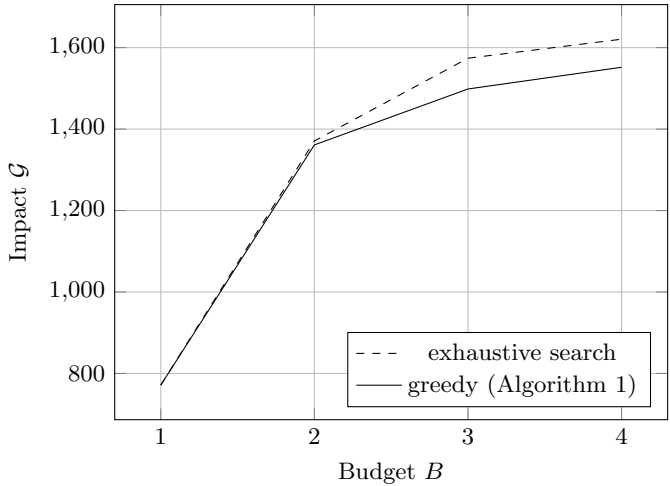

Figure~\ref{fig:attack_impact} shows the impact of attacks found by exhaustive and greedy (Algorithm~\ref{alg:heuristic}) searches for various budget values.
The vertical axis shows the total impact $\calG$ of the attacks, which includes impact that was caused both before and after detection. % and between detection and mitigation.
The figure shows that the attacks found by the greedy search are very close to the ones found by the exhaustive search in terms of total impact, 
with the largest difference being 5\%.

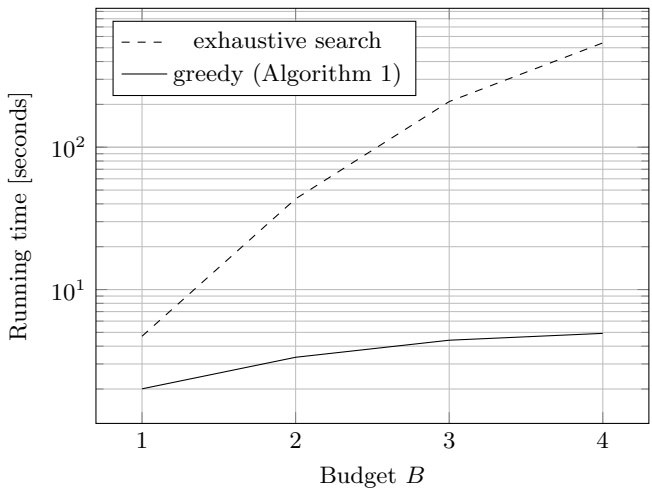
\begin{figure}[h!]
\centering
\begin{tikzpicture}
\begin{axis}[
  width=\columnwidth,
  height=0.8\columnwidth,
  font=\small,
  ymode=log,
  xlabel={Budget $B$},
  ylabel={Running time [seconds]},
  xtick={1,2,3,4},
  grid=both,
  legend pos=north west,
]
\addplot[dashed] table[x=budget, y=exhaustive.runtime, comment chars={\%}, col sep=comma] {software/multi_stage_strategies/plots/attack_algorithms.csv};
\addlegendentry{exhaustive search};
\addplot[] table[x=budget, y=greedy.runtime, comment chars={\%}, col sep=comma] {software/multi_stage_strategies/plots/attack_algorithms.csv};
\addlegendentry{greedy (Algorithm~\ref{alg:heuristic})};
\end{axis}
\end{tikzpicture}
\caption{Running time of exhaustive and greedy searches.}
\label{fig:attack_runtime}
\end{figure}

Figure~\ref{fig:attack_runtime} compares the greedy heuristic (Algorithm~\ref{alg:heuristic}) to the exhaustive search in terms of running time.
Note that we used fairly small problem instances for our experiments in order to be able to apply the algorithms to a large number of networks.
We observe that the running time of the greedy heuristic is much lower than that of the exhaustive search, and it grows slower as the attacker's budget increases.

\subsubsection{Detector Configuration}

Next, we evaluate the metaheuristic search algorithm for finding detector configurations.
We compare our strategic configurations to a non-strategic baseline represented by uniform configurations, which assign the same false-positive rate to all detectors.
We find quasi-optimal uniform configurations using the same search algorithm, but restricting the search space to a single scalar value, which is used for all detectors.
For these experiments, we let the attacker's budget be enough to compromise $B = 2$ intersections; we assume that the attacker always mounts a best-response attack; and we let the unit cost of false positives be equal to $C = 10$.

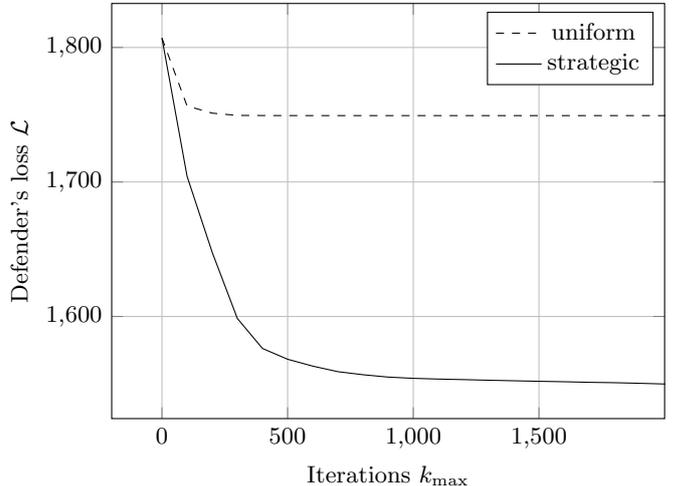
\begin{figure}[h!]
\centering
\begin{tikzpicture}
\begin{axis}[
  width=\columnwidth,
  height=0.8\columnwidth,
  font=\small,
  xmax=2000,
  xlabel={Iterations $k_{\max}$},
  ylabel={Defender's loss $\calL$},
  grid=both,
  xtick={0, 500, 1000, 1500},
]
\addplot[dashed] table[x=iterations, y=uniform.total, comment chars={\%}, col sep=comma] {software/multi_stage_strategies/plots/configuration_algorithms.csv};
\addlegendentry{uniform};
\addplot[] table[x=iterations, y=strategic.total, comment chars={\%}, col sep=comma] {software/multi_stage_strategies/plots/configuration_algorithms.csv};
\addlegendentry{strategic};
\end{axis}
\end{tikzpicture}
\caption{Defender's loss resulting from uniform and strategic detector configurations output by the metaheuristic search algorithm.}
\label{fig:detector_configuration}
\end{figure}

Figure~\ref{fig:detector_configuration} shows the defender's total loss---which includes both the cost of investigating false alarms and the total impact of the attack---with strategic and uniform detector configurations.
The horizontal axis shows the number of iterations $k_{\max}$ for which the search algorithms was run.
We can learn two important lessons from this figure.
First, strategic thresholds result in much lower losses than uniform ones, which suggests that game-theoretic optimization can have a significant practical impact.
Second, losses decrease rapidly in the first 100 or 500 hundred iterations, but they do not decrease further even after a significant number of additional iterations~\footnote{We actually run the search with $k_{\max}   = 100,000$ iterations, but plot only the first 2,000 for clarity, since losses do not decrease significantly after 2,000 iterations.}, which suggests that the search algorithm is a very efficient practical approach for finding near optimal detector configurations.

\begin{figure}[h!]
\centering
\begin{tikzpicture}
\begin{axis}[
  width=\columnwidth,
  height=0.8\columnwidth,
  font=\small,
  xmax=2000,
  xlabel={Iterations $k_{\max}$},
  ylabel={False-positive cost and attack impact},
  grid=both,
  legend style={at={(0.97,0.425)},anchor=east},
  xtick={0, 500, 1000, 1500},
]
\addplot[dashed] table[x=iterations, y=strategic.cost, comment chars={\%}, col sep=comma] {software/multi_stage_strategies/plots/configuration_algorithms.csv};
\addlegendentry{cost of strategic};
\addplot[dash dot] table[x=iterations, y=uniform.cost, comment chars={\%}, col sep=comma] {software/multi_stage_strategies/plots/configuration_algorithms.csv};
\addlegendentry{cost of uniform};
\addplot[] table[x=iterations, y=strategic.impact, comment chars={\%}, col sep=comma] {software/multi_stage_strategies/plots/configuration_algorithms.csv};
\addlegendentry{impact from strat.};
\addplot[dotted] table[x=iterations, y=uniform.impact, comment chars={\%}, col sep=comma] {software/multi_stage_strategies/plots/configuration_algorithms.csv};
\addlegendentry{impact from unif.};
\end{axis}
\end{tikzpicture}
\caption{Defender's false-positive cost and attack impact resulting from uniform and strategic configurations output by the metaheuristic search algorithm.}
\label{fig:detector_configuration2}
\end{figure}
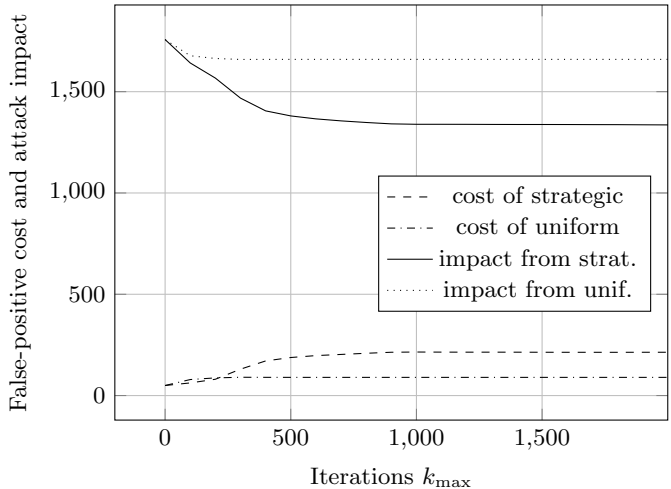

Figure~\ref{fig:detector_configuration2} shows the cost of false positives and the total impact of attacks with strategic and uniform detector configurations found by the search algorithm.
We observe that strategic detector configurations may result in slightly more false-positive errors, but they can significantly decrease the impact of attacks.

\section{Related Work}
\label{sec:related}

In this section, we briefly survey related work on the vulnerability of transportation networks, the optimal configuration of attack detectors, and game theory for security of cyber-physical systems.

\subsection{Vulnerability of Transportation Networks}

We first give a brief overview of the related work on the vulnerability of transportation networks.
%Due to space limitations, we omit reviewing other, less related areas, such as the vast literature on traffic modeling and assignment~\cite{bellomo2011modeling,patriksson1994traffic} and vulnerability analyses from the complex-networks community~\cite{albert2000error}.
%
A number of research efforts have studied the vulnerability of transportation networks to natural disasters and attacks.
However, to the best of our knowledge, our work is the first one to consider traffic-signal tampering attacks against general transportation networks.

In a closely related work, Reilly et al.\ consider the vulnerability of freeway control systems to attacks on the sensing and control infrastructure~\cite{reilly2014cybersecurity}.
They present an in-depth analysis on the takeover of a series of onramp-metering traffic lights using a methodology based on finite-horizon optimal control techniques and multi-objective optimization.

Prior work has studied the impact of other disruptive events as well.
Sullivan et al.\ study short-term disruptive events, such as partial flooding, and propose an approach that employs various link-based capacity-disruption values~\cite{sullivan2010identifying}.
The proposed approach can be used to identify and rank the most critical links and to quantify transportation network robustness (i.e., inverse vulnerability).
% \url{http://www.sciencedirect.com/science/article/pii/S0965856410000418}
%
Jenelius and Mattson introduce an approach for systematically analyzing the robustness of road networks to disruptions affecting extended areas, such as floods and heavy snowfall~\cite{jenelius2012road}.
Their methodology is based on covering the area of interest with grids of uniformly shaped and sized cells, where each cell represents the extent of an event.
The authors apply their approach to the Swedish road network, and find that the impact of area-covering disruptions are largely determined by the internal, outbound, and inbound travel demands of the affected area itself.
% \url{http://www.sciencedirect.com/science/article/pii/S0965856412000213}

In addition to assessing vulnerability, prior work has also considered the problems of identifying critical links and studying other aspects of vulnerabiltiy.
Scott et al.\ propose a comprehensive, system-wide approach for identifying critical links and evaluating network performance~\cite{scott2006network}.
Using three hypothetical networks, the authors demonstrate that their approach yields different highway planning solutions than traditional approaches, which rely on volume/capacity ratios to identify congested or critical links.
% \url{http://www.sciencedirect.com/science/article/pii/S0966692305000694}
%
Jenelius proposes a methodology for vulnerability analysis of road networks and considers the impact of road-link closures~\cite{jenelius2010large}.
The author considers different aspects of vulnerability, and explores the dichotomy between system-wide efficiency and user equity.
% \url{http://www.diva-portal.org/smash/record.jsf?pid=diva2\%3A354583&dswid=4957}

Prior work has also considered game-theoretic models of attacks against transportation.
Alpcan and Buchegger investigate the resilience aspects of vehicular networks using a game-theoretic model, in which defensive measures are optimized with respect to threats posed by intentional attacks~\cite{alpcan2011security}.
The game is formulated in an abstract manner, based on centrality values computed by mapping the centrality values of the car communication network onto the road topology.
The authors consider multiple formulations based on varying assumptions on the players' information, and evaluate their models using numerical examples.
% \utl{http://ieeexplore.ieee.org/xpls/abs_all.jsp?arnumber=5671532&tag=1}
%
Bell introduces a two-player non-cooperative game between a network user, who seeks to minimize expected travel cost, and an adversary, who chooses link performance scenarios to maximize the travel cost~\cite{bell2000game,bell2008attacker}.
The Nash equilibrium of this game can be used to measure network performance when users are pessimistic and, hence, may be used for cautious network~design.
% \url{http://www.sciencedirect.com/science/article/pii/S0191261599000429}
% \url{http://classic.rsta.royalsocietypublishing.org/content/366/1872/1893.full}
%
Wu and Amin study normal-form and sequential attacker-defender games over transportation networks to understand how a defender should prioritize its investment in securing a set of facilities~\cite{wu2018security}.

\subsection{Configuration of Detectors}

%\TODO{Based on the HotSoS'16 paper, needs to be revised!}

The problem of configuring the sensitivity of intrusion detection systems in the presence of strategic attackers has been studied in a variety of different ways in the academic literature~\cite{laszka2016optimal}.
%However, to the best of our knowledge, prior work has not considered the problem of simultaneously setting the sensitivity of multiple IDSes that monitor separate but interdependent computer systems.
For example, Alpcan and Basar study distributed intrusion detection in access control systems as a security game between an attacker and an IDS, using a model that captures the imperfect flow of information from the attacker to the IDS through a network~\cite{alpcan2003game,alpcan2004game}.
The authors investigate the existence of a unique Nash equilibrium and best-response strategies under specific cost functions, and analyze long-term interactions using repeated games and a dynamic model.
As another example, Dritsoula et al.\ consider the problem of setting a threshold for classifying an attacker into one of two categories, spammer and spy, based on its intrusion attempts~\cite{dritsoula2012computing}.
They give a characterization of the Nash equilibria in mixed strategies, and show that the equilibria can be computed in polynomial time.
More recently, Lis{\`y} et al.\ study randomized detection thresholds using a general model of adversarial classification, which can be applied to e-mail filtering, intrusion detection, steganalysis, etc.~\cite{lisy2014randomized}.
The authors analyze both Nash and Stackelberg equilibria based on the true-positive to false-positive curve of the classifier, and find that randomizing the detection threshold may force a strategic attacker to design less efficient attacks.
Finally, Zhu and Basar study the problem of optimal signature-based IDS configuration under resource constraints~\cite{zhu2011indices}.

The strategic configuration of the sensitivity of e-mail filtering against spear-phishing and other malicious e-mail is also closely related to the problem considered in this paper.
Laszka et al.\ study a single defender who has to protect multiple users against targeted and non-targeted malicious e-mail~\cite{laszka2015optimal}.
The authors focus on characterizing and computing optimal filtering thresholds, and they use numerical results to demonstrate that optimal thresholds can lead to substantially lower losses than na\"ive ones.
Zhao et al.\ study a variant of the previous model: they assume that the attacker can mount an arbitrary number of costly spear-phishing attacks in order to learn a secret, which is known only by a subset of the users~\cite{zhao2015initial,zhao2016optimal}.
They also focus on the computational aspects of finding optimal filtering thresholds; however, their variant of the model does not capture non-targeted malicious e-mails, such as~spam.

\subsection{Game Theory for Security of Cyber-Physical Systems}

\Aron{Check and revise if necessary!}
Beyond the configuration of detectors, prior efforts have also used game-theory to study a variety of other security problems in cyber-physical systems.
For instance,
Zhu and Basar introduce a game-theoretic framework for resilient control design and studying the trade-off between robustness, security, and resilience~\cite{zhu2015game}.
They employ a hybrid model, which integrates a discrete-time Markov model that captures the evolution of cyberstates with continuous-time dynamics that capture the underlying controlled physical process.
Backhaus et al.\ consider the problem of designing attack-resilient power grids and control systems~\cite{backhaus2013cyber}.
They use game theory to model the conflict between a cyber-physical intruder and a system operator, and use simulation results to assess design options.
Pawlick et al.\ consider Advanced Persistent Threats (APTs) against cloud-controlled cyber-physical systems and design a framework that specifies when a devices should trust commands from the cloud~\cite{pawlick2015flip}.
They model this scenario as a three player game between the cloud administrator, the attacker, and the device by combining the FlipIt game~\cite{van2013flipit,laszka2014flipthem} with a signaling game.
Li et al.\ consider jamming attacks against remote state estimation\cite{li2015jamming}.
In particular, they assume that a sensor and an estimate have to communicate over a wireless, formulate a game-theoretic model, and provide Nash equilibrium strategies.

\section{Conclusion}
\label{sec:concl}

As traffic-control devices in practice evolve into complex networks of smart devices, the risks posed to transportation networks by cyber-attacks increases.
Thus, it is imperative for traffic-network operators to be prepared to detect and mitigate attacks against traffic control.
To provide theoretical foundations for planning and implementing countermeasures, we introduced a game-theoretic model of cyber-attacks against traffic control.
Our security game model consists of three stages: 
defender configuring detectors, attacker mounting a tampering attack against traffic signals, and defender mitigating the attack. 
Since mitigation---in our model---means adapting the schedules of some traffic signals given the schedules of other signals (set by the adversary in the previous stage), it is equivalent to optimizing traffic control in a non-adversarial setting, which has been studied in prior work.
In light of this, we focused on the computational problem of finding optimal actions in the first two stages.
We showed that this is a computationally hard problem, which prompted us to propose efficient heuristic algorithms.

Using numerical results, we demonstrated that the proposed algorithms are practical.
In particular, we first showed that the greedy algorithm for attackers is close to optimal and computationally very efficient.
Second, we showed that the metaheuristic search algorithm for detector configuration is effective, and it can significantly decrease losses compared to non-strategic detector configuration.
We also introduced and studied a Gaussian-process based traffic-anomaly detector, which we showed to be very effective at detecting tampering attacks against traffic signals.

\section*{Acknowledgements}
This work was supported in part by the National Science Foundation under Grant IIS-1905558.

\bibliographystyle{elsarticle-num}
\bibliography{references}

\listoftodos

\end{document}